  \theoremstyle{plain}
  \newtheorem{theorem}{Theorem}[section]
  \newtheorem{lemma}[theorem]{Lemma}  
  \newtheorem{corollary}[theorem]{Corollary}  
  \newtheorem{fact}[theorem]{Fact}
  \newtheorem{observation}[theorem]{Observation}
  \newtheorem{lem}[theorem]{Lemma}
  \renewenvironment{lemma}{\begin{lem}}{\end{lem}}
  \crefname{lem}{Lemma}{Lemmas}
  \newtheorem{cor}[theorem]{Corollary}
  \crefname{cor}{Corollary}{Corollaries}
  \theoremstyle{definition}
  \newtheorem{definition}[theorem]{Definition}
   \newtheorem{defi}[theorem]{Definition}
  \renewenvironment{definition}{\begin{defi}}{\end{defi}}
  \crefname{defi}{Definition}{Definitions}
  \newtheorem{example}[definition]{Example}
  \newtheorem*{claim}{Claim}
\title{Efficient Index for Weighted Sequences}
\author[1]{Carl Barton}
\author[2]{Tomasz Kociumaka\footnote{Supported by the Polish Ministry of Science and Higher Education under the `Iuventus Plus' program in 2015-2016 grant no 0392/IP3/2015/73.}$^{,}$}
\author[3]{Solon P. Pissis}
\author[2,3]{Jakub Radoszewski\footnote{The author is a Newton International Fellow.}$^{,*,}$}
\affil[1]{The Blizard Institute, Barts and The London School of Medicine and Dentistry,
    Queen Mary University of London, UK\\
    \texttt{carl.barton@qmul.ac.uk}}
\affil[2]{Institute of Informatics, University of Warsaw, Warsaw, Poland\\
    \texttt{[kociumaka,jrad]@mimuw.edu.pl}}
\affil[3]{Department of Informatics, King's College London, London, UK\\
    \texttt{solon.pissis@kcl.ac.uk}
}
\date{\vspace{-5ex}}
\newsavebox{\mybox}
\newenvironment{problem}[1]
{\begin{center}\begin{lrbox}{\mybox}\begin{minipage}{0.96\columnwidth}#1\\}
{\end{minipage}\end{lrbox}\fbox{\usebox{\mybox}}\end{center}}
  \newcommand{\Oh}{\mathcal{O}}
  \newcommand{\mayqed}{}
  \newcommand{\Occ}{\mathit{Occ}_{\frac1z}}
  \newcommand{\fr}{$\frac1z$}
  \newcommand{\match}{\approx_{\frac1z}}
  \renewcommand{\P}{\mathcal{P}}
  \newcommand{\T}{\mathcal{T}}
  \renewcommand{\S}{\mathcal{S}}
  \newcommand{\X}{\mathbf{X}}
  \newcommand{\E}{\mathcal{E}}
  \newcommand{\I}{\mathcal{I}}
  \newcommand{\D}{\mathcal{D}} 
  \newcommand{\OL}{\mathit{OL}}
  \newcommand{\PREF}{\mathit{WPT}}
  \renewcommand{\root}{\textsl{root}}
  \newcommand{\str}{\textsl{str}}
  \newcommand{\lca}{\textsl{lca}}
  \newcommand{\back}{\textsl{back}}
  \newcommand{\strback}{\textsl{str-back}}
  \newcommand{\piback}{\pi\textsl{-back}}
  \newcommand{\eend}{\textsl{end}}
  \newcommand{\len}{\textsl{len}}
  \newcommand{\maxlen}{\textsl{maxlen}}
  \newcommand{\wlcp}{\textsl{wlcp}}
  \newcommand{\maxgap}{\textsl{maxgap}}
  \newcommand{\depth}{\textsl{depth}}
  \newcommand{\covercheck}{\textsl{cover-check}}
\begin{document}
  \maketitle
  
\begin{abstract}
  The problem of finding factors of a text string which are identical or similar to a given pattern string is a central problem in computer science.
  A generalised version of this problem consists in implementing an index over the text to support efficient on-line pattern queries.
  We study this problem in the case where the text is {\em weighted}: for every position of the text and every letter of the alphabet a probability of
  occurrence of this letter at this position is given. Sequences of this type, also called position weight matrices, are commonly used to represent imprecise or uncertain data.
  A weighted sequence may represent many different strings, each with probability of occurrence equal to the product of probabilities of its letters at subsequent positions.
  Given a probability threshold $\frac1z$, we say that a pattern string $P$ {\em matches} a weighted text at position $i$
  if the product of probabilities of the letters of $P$ at positions $i,\ldots,i+|P|-1$ in the text is at least $\frac1z$. 
  In this article, we present an $O(nz)$-time construction of an $O(nz)$-sized index that can answer pattern matching queries in a weighted text in optimal time improving upon the state of the art
  by a factor of $z \log z$. Other applications of this data structure include an $O(nz)$-time construction of the weighted prefix table and an $O(nz)$-time computation of all covers 
  of a weighted sequence, which improve upon the state of the art by the same factor. 
\end{abstract}

  \section{Introduction} \label{sec:intro}
  Finding factors of a {\em text} resembling a {\em pattern} 
  constitutes a classical problem in computer science. Apart from its theoretical interest, it is the core computation of many applications~\cite{ASM_Surv} such as search engines, bioinformatics, processing natural language and database search.  
  
  In many situations the text can be considered as fixed and the patterns may arrive later. The algorithmic challenge is then to provide fast and direct access to all the factors of the text via the implementation of an {\em index}. 
%
The most widely used data structures for this purpose are the {\em suffix tree} and the {\em suffix array}~\cite{AlgorithmsOnStrings}.  
These data structures can be constructed in $\Oh(n)$ time for a text of length $n$. Then all locations of a pattern of length
$m$ can be found in the optimal time $\Oh(m+\mathit{Occ})$, where \textit{Occ} is the number of occurrences.

The pattern matching problem for {\em uncertain} sequences has been less explored \cite{DBLP:conf/sdm/LiBKP14}.
In this work we consider a type of uncertain sequences called {\em weighted sequences} (also known as position weight matrices, PWM).
In a weighted sequence every position contains a subset of the alphabet and every letter is assigned a probability of occurrence such that at each position the probabilities
sum up to 1.
Such sequences are common in various applications: (i) data measurements, such as imprecise sensor measurements; (ii) flexible modelling of sequence, such as binding profiles of molecular sequences; (iii) observations are private and thus sequences of observations may have artificial uncertainty introduced deliberately.

In the \textit{weighted pattern matching} (WPM) problem we are given a string of length $m$ called a pattern, a weighted sequence of length $n$ called a text, both over an alphabet $\Sigma$ of size $\sigma$, and a \emph{threshold probability} \fr. The task is to find all positions in the text where the fragment of length $m$ represents the pattern with probability at least \fr. Each such position is called an \emph{occurrence} of the pattern; we also say that the fragment and the pattern \emph{match}. 
An $\Oh(\sigma n \log m)$-time solution for the WPM problem based on Fast Fourier Transform was proposed in \cite{KCL_publication}. This problem was also considered in \cite{amir_weighted_property_matching_j} where a reduction to property matching in a text of size $\Oh(n z^2 \log z)$ was proposed.
  
In this article, we are interested in the indexing version of the WPM problem,
that is, constructing an index to provide efficient procedures for answering queries related to the content of a fixed weighted sequence.
In \cite{costas_weighted_suffix_tree_j}, the authors presented the {\em weighted suffix tree} allowing $\Oh(m + \textit{Occ})$-time WPM queries; 
the construction time and size of that data structure is $\Oh(n \sigma^{z \log z})$.
A direct application of the results in \cite{amir_weighted_property_matching_j} reduces the construction time and the size of that index to $\Oh(n z^2 \log z)$. 
The index structure built in \cite{costas_weighted_suffix_tree_j} consists of a compacted trie of all of the factors with probability greater than or equal to $\frac{1}{z}$.
A similar---though more general---indexing data structure, which assumes $z = \Oh(1)$, was also presented in~\cite{DBLP:journals/corr/ThankachanPSB15} with query time $\Oh(m + m \times \mathit{Occ})$.
Here we propose a tree-like data structure that is similar to the aforementioned ones which is, however, constructed and stored much more efficiently.

  \subparagraph*{Our model of computations}
  We assume word-RAM model with word size $w = \Omega(\log(nz))$.
  We consider the log-probability model of representations of weighted sequences in which
  probabilities can be multiplied exactly in $\Oh(1)$ time.

  A common assumption in practice is that $\sigma = \Oh(1)$ since the most commonly
  studied alphabet is $\Sigma=\{\mathtt{A},\mathtt{C},\mathtt{G},\mathtt{T}\}$.
  In this case a weighted sequence of length $n$ has a representation of $\Oh(n)$ size.
  We describe the indexing data structure under this assumption.
  In the Conclusions Section we briefly discuss the construction of the index for larger alphabets.

  \subparagraph*{Our contribution}  We present an $\Oh(nz)$-time construction of an $\Oh(nz)$-sized index that answers weighted pattern matching queries
  in optimal $\Oh(m+\mathit{Occ})$ time improving upon~\cite{amir_weighted_property_matching_j} by a factor of $z \log z$.
  Applications of our data structure include an $\Oh(nz)$-time construction of the {\em weighted prefix table}
  and an $\Oh(nz)$-time computation of all \emph{covers} of a weighted sequence, which improve upon \cite{DBLP:conf/cwords/BartonP15}
  and \cite{costas_weighted_suffix_tree_j}, respectively, by the same factor in the complexity.
  
  \subparagraph*{Structure of the paper}
  In \cref{sec:prelim} basic notation related to weighted sequences, tries and compacted tries is presented.
  In particular, we introduce an important notion of extensions of solid prefixes, which is then used to construct
  an intermediate data structure that is crucial to our index, called solid factor trie, in \cref{sec:trie}.
  The weighted index is described in \cref{sec:index}.
  First, in \cref{subsec:trie}, we show how the main component of the index, the compacted trie
  of maximal solid factors, is obtained from the solid factor trie, and then, in \cref{subsec:index},
  a black-box description of the weighted index together with all the auxiliary data structures is given.
  \cref{sec:applications} contains two examples of applications of the weighted index.
  We end with a Conclusions Section where we sketch changes to be made to the index in the case
  of a superconstant-sized integer alphabet.
  \cref{app:figures} contains an example of our data structure for a sample weighted sequence.
  
  \section{Preliminaries}\label{sec:prelim}

  Let $\Sigma=\{s_1,s_2,\ldots,s_{\sigma}\}$ be an alphabet.
  A \emph{string} $X$ over $\Sigma$ is a finite sequence of letters from $\Sigma$.
  By $X[i]$, for $1 \le i \le |X|$, we denote the $i$-th letter of $X$.
  The {\em empty string} is denoted by $\varepsilon$.
  By $X[i..j]$ we denote the string $X[i] \ldots X[j]$ called a \emph{factor} of $X$ (if $i>j$, then the factor is an empty string).
  A factor is called a \emph{prefix} if $i=1$ and a \emph{suffix} if $j=|X|$.
  A factor $Y$ of a string $X$ is {\em proper} if $Y \neq X$.  
  By $X^R$ we denote the reversal (the mirror image)
  of $X$.

  \begin{definition}[Weighted sequence]\label{def:weighted_sequence}
    A weighted sequence $X=x_1x_2 \ldots x_n$ of length $|X|=n$ over an alphabet $\Sigma=\{s_1,s_2,\ldots,s_{\sigma}\}$
    is a sequence of sets of pairs of the form:
    $$x_i = \{(s_j,\ \pi^{(X)}_i(s_j))\ :\ j \in \{1,2,\ldots,\sigma\}\}.$$
    If the considered weighted sequence is unambiguous, we write $\pi_i$ instead of $\pi^{(X)}_i$.
    Here, $\pi_i(s_j)$ is the occurrence probability of the letter $s_j$ at the position $i \in \{1,\ldots,n\}$.
    These values are non-negative and sum up to 1 for a given $i$.
  \end{definition}
  
  
  The \emph{probability of matching} of a string $P$ with a weighted sequence $X$, both of length $n$, equals
  $$\P(P,X) = \prod_{i=1}^n \pi^{(X)}_i(P[i]).$$
  We say that a string $P$ \emph{matches} a weighted sequence $X$
  with probability at least \fr, denoted by $P \match X$, if $\P(P,X) \ge \frac1z$.
  By $X[i..j]$ we denote a weighted sequence called a \emph{factor} of $X$ and equal to $x_i\ldots x_j$ (if $i>j$, then the factor is an empty weighted sequence).
  We then say that a string $P$ \emph{occurs} in $X$ at position $i$ if $P$ matches the factor $X[i..i+m-1]$.
  We also say that $P$ is a \emph{solid factor} of $X$ (starting, occurring) at position $i$.
  By $\Occ(P,X)$ we denote the set of all positions where $P$ occurs in $X$.
  The main problem considered in the paper can be formulated as follows.

  \begin{problem}{\textbf{Problem 1.} Weighted Indexing}~\\
    \textbf{Input:} A weighted sequence $X$ of length $n$ over an alphabet $\Sigma$ of size $\sigma$ and a threshold probability \fr.\\
    \textbf{Queries:} For a given pattern string $P$ of length $m$, check if $\Occ(P,X) \ne \emptyset$, compute $|\Occ(P,X)|$, or report all elements of $\Occ(P,X)$.
  \end{problem}

  We say that $P$ is a \emph{(right-)maximal solid factor} of $X$ at position $i$
  if $P$ is a solid factor of $X$ at position $i$ and no string $P' = Ps$, for $s \in \Sigma$, is a solid factor of $X$ at this position.

  \begin{fact}[Amir et al.~\cite{amir_weighted_property_matching_j}]\label{fct:maximal}
    A weighted sequence has at most $z$ different maximal solid factors starting at a given position.
  \end{fact}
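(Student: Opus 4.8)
The plan is to show that the maximal solid factors starting at a fixed position $i$ form a prefix-free family of strings, and then to convert the probability threshold \fr into a counting bound by a short probabilistic argument. So first I would fix the position $i$ and study the set of all solid factors starting at $i$.

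The key preliminary observation is that this set is closed under taking prefixes: if $P$ is a solid factor at $i$ and $P'$ is a prefix of $P$, then $\P(P', X[i..i+|P'|-1]) \ge \P(P, X[i..i+|P|-1]) \ge \frac1z$, since passing from $P'$ to $P$ only multiplies the matching probability by additional factors $\pi_{i+k-1}(\cdot)$, each lying in $[0,1]$. Viewing the solid factors at $i$ as nodes of a trie, the maximal solid factors are precisely its leaves. In particular, no maximal solid factor is a proper prefix of another: if $P_a$ were a proper prefix of $P_b$, then the one-letter extension $P_a s$ with $s = P_b[|P_a|+1]$ would be a prefix of $P_b$ and hence itself a solid factor, contradicting the maximality of $P_a$. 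Thus the maximal solid factors $P_1,\dots,P_k$ form an antichain under the prefix order.

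Next I would turn the threshold into the bound $k \le z$. Let $L = n-i+1$ be the length of the longest factor that can start at position $i$ (so every solid factor at $i$ has length at most $L$), and consider the product distribution on strings $S$ of length $L$ in which the letter at offset $j$ is drawn independently according to $\pi_{i+j-1}$. This is a genuine probability distribution because the values $\pi_{i+j-1}(\cdot)$ are non-negative and sum to $1$ at every position. For any string $P$ with $|P|\le L$, the event that $P$ is a prefix of $S$ has probability exactly $\P(P, X[i..i+|P|-1])$. Since $P_1,\dots,P_k$ form an antichain, the corresponding prefix events are pairwise disjoint, so their probabilities sum to at most $1$; as each of these probabilities is at least $\frac1z$ by the definition of a solid factor, we get $\frac{k}{z}\le 1$, that is, $k\le z$.

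The step that needs the most care is establishing prefix-closure and the resulting antichain property, since the whole counting argument rests on the disjointness of the prefix events; once that is in place, the conclusion is a one-line consequence of the fact that disjoint events have total probability at most $1$. Everything else is routine, so I expect no serious obstacle beyond getting this structural claim stated cleanly.
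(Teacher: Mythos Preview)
Your argument is correct. The prefix-closure observation is exactly right, it yields the antichain property cleanly, and the product-measure argument then gives $\sum_j \P(P_j, X[i..i+|P_j|-1]) \le 1$, hence $k \le z$. One cosmetic point: you do not really need prefix closure in full strength for the antichain step---it suffices that whenever $P_a$ is a proper prefix of a solid factor $P_b$, the one-letter extension $P_a\,P_b[|P_a|+1]$ is already a solid factor, which follows directly from the monotonicity of the product. But this is a matter of presentation, not substance.

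As for comparison with the paper: the paper does not prove this statement at all. It is recorded as a cited fact from Amir et al., so there is nothing to compare your approach against within this text. Your probabilistic/antichain argument is the standard proof of this bound and is essentially what the cited reference does; in particular, the underlying inequality is a Kraft-type bound for a prefix-free set under the product distribution induced by the columns of $X$.
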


  For each position of a weighted sequence $X$ we define the {\em heaviest letter} as the letter with the maximum probability (breaking ties arbitrarily).
  By $\X$ we denote a string obtained from $X$ by choosing at each position the heaviest letter.
  We call $\X$ the \emph{heavy string} of $X$. 

  \subsection{Extensions of solid factors}
  Let us fix a weighted sequence $X$ of length $n$.
  If $F$ is a solid factor of $X$ starting at position $i$ and ending at position $j$, $j \ge i-1$, then
  the string $F \X[j+1..n]$ is called the \emph{extension of the solid factor} $F$.
  By $\E$ we denote the set of extensions of all solid factors of $X$.

  \begin{observation}\label{obs:alt}
    $\E$ is exactly the set of extensions of all \emph{maximal} solid factors of $X$.
  \end{observation}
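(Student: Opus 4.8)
The plan is to prove the two set inclusions separately. One direction --- that every extension of a \emph{maximal} solid factor lies in $\E$ --- is immediate, since a maximal solid factor is in particular a solid factor and $\E$ is defined as the set of extensions of \emph{all} solid factors. So the entire content lies in the reverse inclusion: I want to show that the extension of an arbitrary solid factor coincides with the extension of some maximal solid factor starting at the same position.

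The key fact I would isolate first is a characterization of maximality in terms of the heavy string $\X$. Suppose $F$ is a solid factor starting at $i$ and ending at $j$. For any letter $s\in\Sigma$, the string $Fs$ is a solid factor starting at $i$ if and only if $\P(F,X[i..j])\cdot\pi_{j+1}(s)\ge\frac1z$, and this product is maximized precisely when $s=\X[j+1]$ is the heaviest letter at position $j+1$. Hence $F$ admits a one-letter extension to a solid factor if and only if it admits the extension by the heavy letter $\X[j+1]$; equivalently, $F$ is maximal iff $j=n$ or $F\,\X[j+1]$ fails to be a solid factor.

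Given this characterization, for an arbitrary solid factor $F$ ending at $j$ I would construct its maximal ``heavy completion'': let $j'$ be the largest index $j'\ge j$ for which $F\,\X[j+1..j']$ is still a solid factor (such $j'$ exists, as $j'=j$ already works), and set $G=F\,\X[j+1..j']$. By the characterization above $G$ is a maximal solid factor starting at $i$: either $j'=n$, or $G\,\X[j'+1]=F\,\X[j+1..j'+1]$ is not solid by the maximality of $j'$. Its extension is then
$$G\,\X[j'+1..n] = F\,\X[j+1..j']\,\X[j'+1..n] = F\,\X[j+1..n],$$
which is exactly the extension of $F$. Thus every element of $\E$ is the extension of some maximal solid factor, completing the inclusion and hence the proof.

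The point to get right is the characterization step: the argument hinges entirely on the fact that appending the heaviest letter is both the most permissive way to extend a solid factor and the only candidate that matters when deciding maximality, so that greedily appending heavy letters reaches maximality without ever altering the extension string. I would also briefly confirm the boundary cases --- the empty solid factor (ending at $j=i-1$ with matching probability $1$) and factors ending at position $n$ (automatically maximal, since no in-range one-letter extension exists) --- but these are subsumed by the same characterization and require no separate treatment.
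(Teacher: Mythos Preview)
Your argument is correct and follows essentially the same route as the paper: pick the largest $j'\ge j$ for which $F\,\X[j+1..j']$ remains a solid factor, observe that this string is a maximal solid factor whose extension coincides with that of $F$. You make explicit the characterization of maximality via the heavy letter (which the paper leaves implicit when it asserts that $M$ is maximal), but the structure of the proof is the same.
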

  \begin{proof}
    Let $F \X[j+1..n] \in \E$ be an extension of a solid factor $F$ starting at position $i$
    and let $k \in \{j,\ldots,n\}$ be the maximum index such that $F \X[j+1..k]$ is a solid factor
    of $X$ starting at position $i$.
    Then $M = F \X[j+1..k]$ is a maximal solid factor and $F \X[j+1..n] = M \X[k+1..n]$ is its extension.
  \mayqed\end{proof}

  The following observation shows that $\E$ is closed under suffixes.

  \begin{observation}\label{obs:cut}
    If $S \in \E$, $S \ne \varepsilon$, then the longest proper
    suffix $S'$ of $S$ also belongs to $\E$.
  \end{observation}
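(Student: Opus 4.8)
The plan is to unfold the definition of $\E$ directly: given $S \in \E$ with $S \ne \varepsilon$, write $S = F\,\X[j+1..n]$, where $F$ is a solid factor of $X$ starting at position $i$ and ending at position $j \ge i-1$, and identify its longest proper suffix $S'$ with $S[2..|S|]$, the string obtained by deleting the leading letter. The guiding structural fact is that such an extension always has length $|F| + (n-j) = n - i + 1$, so deleting one leading letter should produce a string of length $n - (i+1) + 1$, i.e.\ an extension attached to the \emph{next} position $i+1$. The whole proof then amounts to exhibiting a solid factor starting at $i+1$ whose extension is exactly $S'$.

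First I would split on whether $F$ is empty. If $F \ne \varepsilon$ (so $j \ge i$), then $S' = F[2..|F|]\,\X[j+1..n]$, and it suffices to check that $F[2..|F|]$ is a solid factor starting at position $i+1$; its extension is then literally $S'$. This is the one place an inequality is needed: from $\P(F, X[i..j]) = \pi_i(F[1]) \cdot \P(F[2..|F|], X[i+1..j])$ and the fact that every $\pi$-value is at most $1$, deleting the first letter can only increase the matching probability, so $\P(F[2..|F|], X[i+1..j]) \ge \P(F, X[i..j]) \ge \frac1z$. If instead $F = \varepsilon$ (so $j = i-1$), then $S = \X[i..n]$ and $S' = \X[i+1..n]$, which is exactly the extension of the empty solid factor starting at position $i+1$, so $S' \in \E$ once more.

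The step I expect to need the most care is the boundary bookkeeping rather than any real mathematics. I would verify that $S \ne \varepsilon$ forces $i \le n$, so that position $i+1 \le n+1$ indeed carries the (possibly empty) solid factor we invoke, and I would check that the degenerate case $|F| = 1$ folds cleanly into the picture: there $F[2..|F|] = \varepsilon$ and its extension $\X[j+1..n] = \X[i+1..n]$ agrees with $S'$. Note that maximality of $F$ is never used, so there is no need to route through \cref{obs:alt}; working with an arbitrary solid factor is both sufficient and simpler. The substantive content is thus the single probability inequality, with everything else reducing to confirming that the first letter of $S$ is precisely the letter peeled off at position $i$.
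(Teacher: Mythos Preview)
Your proposal is correct and follows essentially the same two-case split as the paper's proof: when $|F|\ge 1$, exhibit $S'$ as the extension of the longest proper suffix of $F$ (starting at position $i+1$), and when $F=\varepsilon$, exhibit $S'$ as the extension of the empty factor at position $i+1$. The only difference is that you spell out the one-line probability inequality $\P(F[2..|F|],X[i+1..j])\ge \P(F,X[i..j])\ge \tfrac1z$ to justify that the truncated factor is still solid, whereas the paper leaves this implicit.
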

  \begin{proof}
    Assume that $S$ is an extension of a solid factor $F$.
    If $|F| \ge 1$, then $S'$ is an extension of the longest proper suffix of $F$.
    Otherwise, $S'$ is an extension of an empty factor.
  \mayqed\end{proof}

  \subsection{Tries}
  We consider rooted labeled trees with labels on edges, called \emph{tries}.
  The labels are letters from $\Sigma$; edges going down from a single node have distinct labels.
  The root is denoted as $\root$.

  If $T$ is a trie and $u$, $v$ are its two nodes such that $v$ is an ancestor of $u$, then by
  $\str(u,v)$ we denote the string spelled by the edge labels on the path from $u$ to $v$.
  Using this notation, we say that the trie \emph{represents} the family of strings $\{\str(u,\root)\,:\, u \in T\}$.
  As usual by $\lca(x,y)$ we denote the lowest common ancestor of the nodes $x$ and $y$.
  By $L_i$ for $i \ge 0$ we denote the $i$-th \emph{level} of $T$ that consists of nodes at depth $i$
  in the trie.

  A \emph{compacted trie} is a trie in which maximal paths whose inner nodes have degree 2 are
  represented as single edges with string labels.
  Usually such labels are not stored explicitly, but as pointers to a base string (or base strings).
  The remaining nodes are called \emph{explicit} nodes, whereas the nodes that are removed due to
  compactification are called \emph{implicit} nodes.
  A well-known example of a compacted trie is a suffix tree of a string.

  A \emph{suffix tree of a trie} $T$, denoted as $\S(T)$, is a compacted trie of the strings
  $\str(u,\root)$ for $u \in T$; see \cite{DBLP:journals/tcs/Breslauer98,DBLP:conf/isaac/Shibuya99,journals/ieice/Shibuya03}.
  The explicit nodes of $\S(T)$ that correspond to $\str(u,\root)$ for $u \in T$ are
  called \emph{terminal} nodes.
  The string labels of the edges of $\S(T)$ are not stored explicitly, but correspond to upward paths
  in the trie $T$.
  For a node $v$ of $\S(T)$, by $\str(v)$ we denote the concatenation of labels of the edges
  from the root of $\S(T)$ to $v$.

  \begin{fact}[Shibuya~\cite{journals/ieice/Shibuya03}]\label{fct:suftree_of_tree}
    The suffix tree of a trie with $N$ nodes has size $\Oh(N)$ and can be constructed in $\Oh(N)$ time.
  \end{fact}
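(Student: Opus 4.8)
The plan is to establish the statement in two parts: the $\Oh(N)$ size bound, which is essentially structural, and the $\Oh(N)$ construction time, which requires lifting a linear-time suffix-tree algorithm to the branching setting. Since we are working under $\sigma = \Oh(1)$, no alphabet-dependent factors need to be tracked.

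For the size bound, I would first observe that $\S(T)$ is the compacted trie of the at most $N$ strings $\str(u,\root)$, one per node $u \in T$. Consequently there are at most $N$ terminal nodes (as usual one may append a per-node sentinel, or simply mark terminal positions, to separate a string from any other string of which it is a prefix). Since every explicit branching node of a compacted trie has at least two children, the number of such internal nodes is at most the number of terminal nodes, hence $\Oh(N)$; thus the number of explicit nodes and of edges is $\Oh(N)$ as well. Crucially, each edge label is stored not explicitly but as a pointer to the corresponding upward path in $T$, so it occupies $\Oh(1)$ space regardless of its length. This yields total size $\Oh(N)$.

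For the construction, the key relation is that for a node $u$ with parent $p$ and connecting edge label $c$ we have $\str(u,\root) = c\,\str(p,\root)$; that is, the string associated with a child is obtained by prepending a single letter to the string of its parent. This is exactly the ``prepend a letter'' structure exploited by Weiner's algorithm, so I would generalize Weiner's construction: process the nodes of $T$ in top-down (breadth-first) order, so that when $u$ is handled its parent $p$ has already been inserted, and then insert $\str(u,\root)$ by extending the locus of $\str(p,\root)$ with the letter $c$ via the Weiner-link mechanism. Edge splitting and the creation of new explicit nodes proceed exactly as in the classical string case, with letter comparisons along an edge performed in $\Oh(1)$ time each using the implicit labels (upward paths in $T$).

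The step I expect to be the main obstacle is the amortized running-time analysis. For a single string the work spent chasing links telescopes along the single linear chain of suffixes and sums to $\Oh(n)$. For a trie the strings $\str(u,\root)$ form a branching family, so the navigation cost must be amortized simultaneously over all root-to-node paths, and the danger is that shared portions near the root get re-traversed once per branch. I would resolve this with a potential-function argument adapted to the trie, charging each unit of link or edge traversal against a distinct structural quantity (e.g. the net change in the string-depth of the active locus as one descends a branch) and arguing that the total potential released over the entire top-down pass is $\Oh(N)$, using that a link step can only decrease this depth by an amount whose cumulative total over the tree is linear. With $\sigma = \Oh(1)$ the Weiner-link tables cost $\Oh(1)$ per node, so no extra factor is incurred and the construction runs in $\Oh(N)$ total time; this is precisely the analysis carried out by Shibuya.
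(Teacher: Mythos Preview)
The paper does not prove this statement at all: it is stated as a \emph{Fact} attributed to Shibuya and used as a black box, with no argument given. So there is nothing in the paper to compare your proposal against.

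That said, your sketch is essentially the Breslauer--Kosaraju line (generalised Weiner over a trie processed top-down), which under the paper's standing assumption $\sigma=\Oh(1)$ does yield $\Oh(N)$; this is a legitimate route to the stated bound. Two remarks are worth making. First, Shibuya's actual contribution is to obtain $\Oh(N)$ for \emph{integer} alphabets, which the paper explicitly relies on in the Conclusions; that result is not reached by the Weiner-style argument you outline (whose per-node link tables would reintroduce a $\sigma$ or $\log\sigma$ factor) but via a different, Farach-style divide-and-conquer adapted to tries. Second, the amortisation you correctly flag as the crux is not resolved by the potential you suggest as stated: if a node $p$ has several children, the drop in locus depth you want to credit against gets ``spent'' once per child, so a naive $\sum_u(\phi(p(u))-\phi(u))$ does not telescope over a branching tree. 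The classical fix is to amortise against node-depth in the growing suffix tree together with a careful reuse of the walk-up across siblings (or, equivalently, to bound the total length of Weiner-link chains by the number of explicit nodes created); you have identified the right obstacle but not yet the actual argument that closes it.
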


  \section{Solid factor trie}\label{sec:trie}
  For a weighted sequence $X$ of length $n$, a \emph{solid factor trie} of $X$, denoted as $\T$,
  is a trie representing the \emph{reversals of the strings from $\E$}.
  By this definition:

  \begin{observation}
    If $S$ is a solid factor of $X$, then there exist nodes $u$, $v$ in $\T$ such that $\str(u,v)=S$.
  \end{observation}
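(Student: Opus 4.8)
The plan is to realise each solid factor as a bottom-up subpath of the path that spells the reversal of its extension. First I would dispose of the trivial case $S=\varepsilon$ by taking $u=v=\root$, so assume $S$ is nonempty and fix its starting position $i$ and ending position $j=i+|S|-1$. By the definition of an extension, $E:=S\,\X[j+1..n]$ is the extension of the solid factor $S$, and hence $E\in\E$; in particular $S$ is a prefix of $E$ of length $|S|=j-i+1$, while $E$ itself has length $n-i+1$.

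Next I would invoke that $\T$ represents the reversals of the strings in $\E$: since $E\in\E$, its reversal $E^R$ labels a root-to-node path of $\T$, so there is a node $u$ at depth $|E|=n-i+1$ whose downward path from $\root$ spells $E^R$. The key algebraic identity is the factorisation
$$E^R=(S\,\X[j+1..n])^R=\X[j+1..n]^R\,S^R,$$
in which $S^R$ occupies exactly the last $|S|$ letters of $E^R$, i.e.\ the $|S|$ edges incident to $u$. I would then let $v$ be the ancestor of $u$ reached by climbing these $|S|$ edges, i.e.\ the node at depth $(n-i+1)-|S|=n-j$; since $\T$ is an (uncompacted) trie of a prefix-closed family, every prefix of $E^R$ is an explicit node, so this $v$ genuinely exists.

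It remains to read off the label of this subpath. Descending from $v$ to $u$ the edges spell $S^R$, so traversing them in the opposite direction, from $u$ up to $v$ --- which is precisely what $\str(u,v)$ denotes --- spells $(S^R)^R=S$; hence $\str(u,v)=S$, as required. I expect the only real subtlety to be this reversal bookkeeping: the whole point of storing reversed extensions is to turn each solid factor, which is a \emph{prefix} of an extension, into a \emph{suffix} of the corresponding reversed string, so that it becomes a contiguous bottom-up subpath incident to $u$; keeping the two reversals (of the heavy tail $\X[j+1..n]$ and of $S$) straight is where care is needed. Finally I would check that the boundary cases $j=i$ and $j=i-1$ (the latter giving $v=u$, hence $\str(u,v)=\varepsilon=S$) fall out of the same formula.
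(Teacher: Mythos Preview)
Your proof is correct and is precisely the argument the paper has in mind; the paper simply states the observation as immediate from the definition (introduced by ``By this definition:'') without spelling out the reversal bookkeeping that you carefully verify. One minor redundancy: you already dispose of $S=\varepsilon$ at the outset, so the boundary case $j=i-1$ need not be revisited at the end.
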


  It turns out that the solid factor trie represents all maximal solid factors of $X$ much more efficiently than if each of them
  was stored separately.

  \begin{lemma}\label{lem:size}
    The solid factor trie $\T$ has at most $z$ nodes at each level.
  \end{lemma}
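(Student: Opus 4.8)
The plan is to set up a correspondence between the nodes of $\T$ at level $\ell$ and the strings of $\E$ of length exactly $\ell$, and then to bound the latter quantity by $z$. For the first part, recall that $\T$ represents the family $\{S^R : S \in \E\}$, so a node at level $\ell$ corresponds to a distinct length-$\ell$ prefix $W$ of some reversed extension $S^R$. Writing $W$ as such a prefix, $W^R$ is a length-$\ell$ suffix of some $S \in \E$. Here is where \cref{obs:cut} does the work: since $\E$ is closed under taking longest proper, and hence all, suffixes, the string $W^R$ itself lies in $\E$. Conversely, every $T \in \E$ with $|T| = \ell$ yields the node $W = T^R$ at level $\ell$. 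Thus counting level-$\ell$ nodes amounts to counting the length-$\ell$ members of $\E$, and in fact the two sets are in bijection via $W \mapsto W^R$.

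For the second part, I would exploit that every extension is anchored at the right end of $X$. If $F$ is a solid factor starting at position $i$ and ending at position $j$, then its extension $F\X[j+1..n]$ has length $(j-i+1)+(n-j) = n-i+1$, which depends only on the starting position $i$ and not on $j$. Hence the length-$\ell$ strings of $\E$ are exactly the extensions of solid factors starting at the single position $i = n-\ell+1$. By \cref{obs:alt}, each such extension is already the extension of some maximal solid factor starting at $i$, and each maximal solid factor contributes exactly one extension; so the number of length-$\ell$ strings in $\E$ is at most the number of maximal solid factors starting at position $i$, which is at most $z$ by \cref{fct:maximal}.

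Combining the two parts gives at most $z$ nodes at level $\ell$, as required; the boundary level $\ell = 0$ contributes only the empty string (the extension of the empty factor at position $n+1$) and so is trivially within the bound. The step I expect to require the most care is the first one: verifying that a level-$\ell$ node corresponds to a genuine length-$\ell$ string of $\E$, rather than merely to a prefix of a longer one. This is precisely what suffix-closedness of $\E$ (\cref{obs:cut}) guarantees, so the argument hinges on having that observation in place; the length-versus-starting-position identity for extensions is then the clean observation that converts the per-position count of \cref{fct:maximal} into a per-level count.
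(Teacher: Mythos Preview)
Your proof is correct and follows essentially the same approach as the paper: use \cref{obs:cut} to identify level-$\ell$ nodes with length-$\ell$ strings of $\E$, then use \cref{obs:alt} and \cref{fct:maximal} to bound the latter by $z$. Your version is more explicit---in particular, you spell out the length identity $|F\X[j+1..n]| = n-i+1$ that pins length-$\ell$ extensions to the single starting position $i=n-\ell+1$, which the paper leaves implicit---but the ingredients and the logical structure are the same.
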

  \begin{proof}
    By \cref{obs:cut}, each node at the level $i$ in $\T$ comes from a string of length $i$ in $\E$.
    By \cref{obs:alt,fct:maximal}, there are at most $z$ strings of length $i$ in $\E$.
  \mayqed\end{proof}

  We proceed with a construction of the solid factor trie in time linear in the size of the trie.
  For this, we need to equip the data structure with additional values;
  these enhancements will also turn out useful in the construction of the weighted index.

  For each edge of the trie we store, in addition to its letter label, its probability defined as
  the probability of this letter at the respective position in $X$.
  If $v$ is an ancestor of $u$, then by $\pi(u,v)$ we denote the product of probabilities
  of edges on the path from $u$ to $v$.
  Let $H$ be the heavy path in $\T$ that corresponds to $\X$ and let $h$ be the leaf on this path.
  For each node $v$ of $\T$ we retain the node $\back(v)$ defined as $\lca(v,h)$ and
  the probability $\piback(v) = \pi(v,\back(v))$.
  We also denote $\strback(v) = \str(v,\back(v))$ (those values are not stored).

  \begin{figure}[b!]
    \begin{center}
      \begin{minipage}[b]{13cm}
      \vspace{0pt}
      \begin{algorithm}[H]
      \medskip
      \KwSty{Algorithm} Construct-$\T$($X$, $n$)\\
      \medskip
      $L_0:=\{\root\}$; $h_0:=\root$\;
      \For{$i:=1$ \KwSty{to} $n$}{
        Create a new node $h_i$ being a child of $h_{i-1}$ with the letter $\X[n-i+1]$\;
        $\back(h_i):=h_i$\;
        $\piback(h_i):=1$\;
        $L_i := \{h_i\}$\;
        \ForEach{$v \in L_{i-1}$}{
          \ForEach{$s \in \Sigma$ in order of non-increasing $\pi_{n-i+1}^{(X)}(s)$}{
            \lIf{$v = h_{i-1}$ \KwSty{and} $s= \X[n-i+1]$}{\KwSty{continue}}
            \If{$\pi_{n-i+1}^{(X)}(s) \cdot \piback(v) \ge \frac1z$}{
              Create a new node $u$ being a child of $v$ with the letter $s$\;
                $\back(u) := \back(v)$\;
                $\piback(u):=\pi_{n-i+1}^{(X)}(s) \cdot \piback(v)$\;
                $L_i := L_i \cup \{u\}$\;
              }\lElse{\KwSty{break}}
              
            }
          }
        }
      \medskip
    \end{algorithm}
    \end{minipage}
    \end{center}
  \end{figure}

  \begin{theorem}\label{thm:solid_factor_trie}
    The solid factor trie $\T$ of a weighted sequence $X$ of length $n$ can be constructed in $\Oh(nz)$ time.
  \end{theorem}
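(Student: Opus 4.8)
The plan is to prove that the displayed algorithm Construct-$\T$ outputs exactly the solid factor trie $\T$, equipped with correct values $\back(v)=\lca(v,h)$ and $\piback(v)=\pi(v,\back(v))$ at every node, and that it runs within the claimed bound. First I would fix the level--position correspondence: since the extension of a solid factor starting at position $p$ has length $n-p+1$, the strings of $\E$ of length $i$ are exactly the extensions of solid factors starting at position $n-i+1$, so level $i$ of $\T$ corresponds to that position. This is why the algorithm reads $\X[n-i+1]$ and the probabilities $\pi_{n-i+1}^{(X)}$ while building level $i$, why the heavy path $h_0,h_1,\dots,h_n$ spells $\X^R$ top-down, and why its leaf $h$ corresponds to the extension $\X$ of the empty solid factor at position $1$.

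The heart of the argument is a string-level characterisation of membership in $\E$. Given a candidate $E$ of length $i$ occupying positions $p=n-i+1,\dots,n$, let $q^*$ be the largest position at which $E$ disagrees with $\X$ (or $q^*=p-1$ when $E=\X[p..n]$); I would show that $E\in\E$ if and only if its \emph{solid part} $E[p..q^*]$ is a solid factor. Indeed, the heavy suffix of any extension cannot cover a position where $E$ differs from $\X$, so the solid factor must reach $q^*$; conversely any prefix of a solid factor is again a solid factor, because the omitted probabilities are at most $1$ and hence can only increase the product. Thus $q^*$ is precisely the branch point $\back(v)=\lca(v,h)$ in $\T$, and the quantity $\piback(v)=\pi(v,\back(v))$ equals the matching probability of the solid part, which is exactly what governs membership in $\E$.

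With this characterisation the inductive step over $i$ becomes local. By \cref{obs:cut} every nonempty $E\in\E$ of length $i$ has its longest proper suffix in $\E$, hence a parent node $v\in L_{i-1}$; conversely, prepending a letter $s$ at position $p$ to the $\E$-string of some $v\in L_{i-1}$ yields a member of $\E$ exactly when the updated solid part stays solid. Prepending the heavy letter to $h_{i-1}$ leaves the solid part empty and always succeeds, producing $h_i$ (this is the case skipped by the \KwSty{continue}); prepending any other $s$ extends the solid part to probability $\pi_{n-i+1}^{(X)}(s)\cdot\piback(v)$, so the test $\pi_{n-i+1}^{(X)}(s)\cdot\piback(v)\ge\frac1z$ is both necessary and sufficient for the new node to lie in $\T$. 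The recurrences $\back(u):=\back(v)$ and $\piback(u):=\pi_{n-i+1}^{(X)}(s)\cdot\piback(v)$ then preserve the invariants, since the branch point is inherited from $v$ and the new edge contributes exactly the factor $\pi_{n-i+1}^{(X)}(s)$. This simultaneously establishes completeness (no string of $\E$ is missed) and soundness (no spurious node is created), so $L_i$ is precisely the $i$-th level of $\T$.

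For the running time, \cref{lem:size} gives $|L_{i-1}|\le z$, so across the $n$ iterations the algorithm touches $\Oh(nz)$ nodes, each created in $\Oh(1)$ time via the $\back,\piback$ recurrences using the $\Oh(1)$-time probability products of the log-probability model. Examining the letters of each $v$ in non-increasing order of $\pi_{n-i+1}^{(X)}$ lets the \KwSty{break} stop as soon as the threshold fails, so the work per node is $\Oh(1)$ plus the number of children actually created; with the $\Oh(n)$ preprocessing to compute $\X$ and sort the $\sigma=\Oh(1)$ letters at each position, the total is $\Oh(nz)$. I expect the main obstacle to be the membership characterisation for $\E$ through the branch point $q^*$ --- in particular the argument that the purely local test on $\piback(v)$ is exactly right --- since it is precisely what makes the level-by-level construction both correct and chargeable to $\Oh(1)$ per node.
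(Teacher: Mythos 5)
Your proposal is correct and follows essentially the same route as the paper: a level-by-level induction whose completeness direction rests on \cref{obs:cut} and whose soundness direction reduces membership in $\E$ to the local test $\pi_{n-i+1}^{(X)}(s)\cdot\piback(v)\ge\frac1z$, with the same $\Oh(\sum_i|L_i|)=\Oh(nz)$ accounting via \cref{lem:size} and the sorted-order \textbf{break}. The only difference is presentational: you spell out the characterisation of $\E$ via the last-disagreement position $q^*$ (and why prefixes of solid factors are solid), a step the paper compresses into the single assertion that $S$ is an extension of the solid factor $s\,\strback(v)$.
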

  \begin{proof}
    The trie is constructed by the algorithm Construct-$\T$($X$, $n$).
    We add new nodes to $\T$ level by level.
    A node $v$ at level $i-1$ receives a child with an edge labeled by a letter $s$ if and only if $s\,\strback(v)$
    is a solid factor at position $n-i+1$; this condition is checked using the $\piback(v)$ values.
    Then we assign the child its values of $\back$ and $\piback$.

    The correctness of the algorithm follows from the claim below.

    \begin{claim}
      After the $i$-th step of the outmost loop of the algorithm Construct-$\T$($X$, $n$),
      the trie represents the reversals of the strings from $\E$ of length at most $i$.
    \end{claim}
    \begin{proof}
      The proof goes by induction on $i$.
      The case of $i=0$ is trivial.
      Let us assume that the claim holds for $i-1$ and prove that it then also holds for $i$.
      We need to show that if a node $u$ is created by the algorithm at the $i$-th level, then $\str(u,\root) \in \E$
      and, conversely, if $S \in \E$ is a string of length $i$, then a node $u$ such that $\str(u,\root) = S$ is
      created by the algorithm at the $i$-th level.
      We prove the two implications separately.

      $(\Rightarrow)$
      If the node $u$ is created for some letter $s$, then, by the inductive hypothesis and the condition checked in the algorithm,
      $s\,\strback(v)$ is a solid factor of $X$ starting at position $n-i+1$.
      Let $j$ be the level of the node $\back(v)$.
      Then:
      $$\str(u,\root) = s\,\strback(v)\,\X[n-j+1..n] \in \E.$$

      $(\Leftarrow)$
      Let $S'$ be the longest proper suffix of $S$.
      Then $S' \in \E$ due to \cref{obs:cut}.
      By the inductive hypothesis, there exists a node $v$ in $L_{i-1}$ such that $\str(v,\root) = S'$.
      Then $S$ is an extension of the solid factor $s\, \strback(v)$, so indeed
      $\pi_{n-i+1}^{(X)}(s) \cdot \piback(v) \ge \tfrac1z$
      and the node $u$ corresponding to $S$ will be created.
    \mayqed\end{proof}

    Let us proceed with the complexity analysis.
    At each step of the innermost foreach-loop either a new node is created or the execution of the loop
    is interrupted.
    For a given $i$, the former takes place $|L_i|$ times in total and the latter takes place at most $|L_{i-1}|$ times in total.
    The whole algorithm works in $\Oh(\sum_{i=0}^n |L_i|) = \Oh(nz)$ time due to \cref{lem:size}.
  \mayqed\end{proof}
  
  Before we proceed to the construction of a weighted index, we need to introduce additional values to the solid factor trie
  that enable recovering the maximal solid factors of $X$.
  For a node $u$ of $\T$, by $\eend(u)$ we denote its ancestor $v$ such that $\str(u,v)$
  is a maximal solid factor at position $n-i+1$ in $X$, where $u \in L_i$.
  Moreover, by $\len(u)$ we denote $|\str(u,v)|$.

  \begin{lemma}\label{lem:end_len}
    The values $\eend(u)$ and $\len(u)$ for all nodes $u$ of $\T$ can be computed in $\Oh(nz)$ time.
  \end{lemma}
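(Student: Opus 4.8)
The plan is to show that $\eend(u)$ always lies on the heavy path $H$ and to pin down its level by a single threshold condition on probabilities; the values $\len(u)$ then come for free as level differences.

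First I would record the structural shape of the maximal solid factor $M_u = \str(u,\eend(u))$ read off at a node $u \in L_i$: after its last deviation from the heavy string $\X$ it continues with heavy letters, for as long as one more heavy letter would not push the matching probability below \fr. Concretely, $M_u$ equals $\strback(u)$ followed by a block of heavy letters $\X[\ldots]$, so the path from the root down to $\eend(u)$ spells a prefix of $\X^R$; hence $\eend(u)=h_\ell$ is a heavy-path node with $\ell \le \depth(\back(u))$, and $\len(u)=\depth(u)-\ell$. Writing $\rho(j):=\pi(h_j,\root)$ for the prefix products along the heavy path — a non-increasing sequence computable for all $j$ in $\Oh(n)$ time — the maximality of $M_u$ is exactly the statement that $\ell$ is the \emph{smallest} level with $\rho(\ell)\le z\cdot \pi(u,\root)$: the factor reaching up to $h_\ell$ has probability $\pi(u,\root)/\rho(\ell)\ge\frac1z$, while reaching one level higher violates the threshold. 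Using the identity $\pi(u,\root)=\piback(u)\cdot\rho(\depth(\back(u)))$, the quantity $z\cdot\pi(u,\root)$ — and hence the whole query — can be evaluated in $\Oh(1)$ time per node from the values $\back(u)$ and $\piback(u)$ already stored in $\T$.

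It then remains to answer, for all $\Oh(nz)$ nodes, the query ``return the smallest level $\ell$ with $\rho(\ell)\le z\cdot\pi(u,\root)$'' against the fixed non-increasing array $\rho$. I would avoid a per-node binary search (which costs an extra $\log$ factor) and also avoid sweeping a pointer along $H$ during a depth-first traversal of $\T$: although $\pi(u,\root)$ decreases monotonically down every root-to-leaf path, so that the target level only increases, the need to restore the pointer at branching nodes makes the naive traversal superlinear. Instead I would collect all nodes, sort them by $\pi(u,\root)$, and sweep a single pointer along $\rho$; as the threshold decreases the answer level only advances, so after the sort the entire sweep costs $\Oh(nz)$.

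The one remaining obstacle is to perform this sort without a logarithmic overhead, and this is where I expect the real work to sit. Here I would appeal to the log-probability word-RAM model: each $\pi(u,\root)$ is stored as an $\Oh(w)$-bit (fixed-point) log-probability fitting in $\Oh(1)$ machine words, so the $\Oh(nz)$ keys can be radix-sorted in $\Oh(nz)$ time. Combined with the $\Oh(n)$ precomputation of $\rho$ and the $\Oh(nz)$ evaluation of the thresholds, this yields the claimed $\Oh(nz)$ bound. The main subtlety to get right is the reversal bookkeeping in the first paragraph — correctly identifying that the unread, heavy continuation of $M_u$ corresponds to moving \emph{up} the heavy path from $\back(u)$ toward the root, and that the probability of $\str(u,h_\ell)$ is precisely $\pi(u,\root)/\rho(\ell)$ — after which the threshold condition and the level arithmetic are routine.
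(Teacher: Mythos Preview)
Your characterisation of $\eend(u)$ is correct and clean: writing $\rho(j)=\pi(h_j,\root)$, one indeed has $\eend(u)=h_\ell$ for the least $\ell$ with $\rho(\ell)\le z\cdot\pi(u,\root)$, and $\pi(u,\root)=\piback(u)\cdot\rho(\depth(\back(u)))$ is available in $\Oh(1)$ time from the data already stored in $\T$. Reducing the problem to predecessor queries in the monotone array $\rho$ and answering all queries by a single sweep after sorting the thresholds is a genuinely different route from the paper's.

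The paper does not sort at all. It walks up the heavy path from $h$ to the root, keeping an \emph{active set} $A$ of nodes whose $\eend$-pointer has not yet been fixed but all of whose children's have; at each $v\in H$ it updates the stored probabilities $\pi(u,v)$, adds the leaves hanging in the off-path subtrees of $v$, and assigns $\eend(u)=v$ to every $u\in A$ whose probability would drop below $\tfrac1z$ one step higher. The $\Oh(nz)$ bound follows from a combinatorial observation you do not use: whenever a node stays in $A$ without being resolved, $\str(u,v)$ is a \emph{left}-maximal solid factor ending at that level, and \cref{fct:maximal} limits these to $z$ per level. So the paper's analysis rests only on $\Oh(1)$-time multiplications and comparisons of probabilities, whereas your argument additionally requires linear-time sorting of $\Oh(nz)$ log-probability keys.

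That is the one soft spot in your plan. The paper's model guarantees $w=\Omega(\log(nz))$ and $\Oh(1)$-time exact multiplication of probabilities; it does not promise that a log-probability occupies $\Oh(1)$ words, nor that $w=\Oh(\log(nz))$, both of which you implicitly use for an $\Oh(nz)$ radix sort. Under the natural reading of the model your step is fine, but it is an extra assumption the paper's proof does not need. In exchange, your argument is shorter and exposes the pleasant reduction to threshold queries in a monotone array; the paper's argument is model-robust and, as a by-product, highlights the left-maximal solid factor bound.
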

  \begin{proof}
    Clearly, it suffices to focus on the $\eend$-pointers, as the $\len$-values can be computed from these pointers
    in linear time if only we store for each node its level in the trie.

    For each node $u$, $\eend(u)$ is an ancestor of $\back(u)$ (possibly equal to $\back(u)$),
    therefore it is located on the heavy path $H$.
    For each node $v \in H$ from the leaf $h$ up to the root we will set the $\eend$-pointers
    for all nodes $u$ such that $\eend(u)=v$.
    In the computation we use the following property of the pointers:

    \begin{observation}\label{obs:order}
      If $x$ is an ancestor of $y$, then $\eend(x)$ is an ancestor of $\eend(y)$.
    \end{observation}

    A node will be called \emph{active} if it is a descendant of $v$ such that its $\eend$-pointer has not been computed yet but its
    children's $\eend$-pointers have all been computed.
    After a node $v \in H$ has been considered, a set $A$ containing all the active nodes $u$ together
    with the values $\pi(u,v)$ is stored.
    Initially the set is empty.

    For the next node $v \in H$ we first update the set $A$.
    If $v=h$, then we simply insert $v$ to $A$ with the probability 1.
    Otherwise, we iterate through all the nodes $u$ in the set $A$ and
    multiply their probabilities by the probability of the edge $\pi(v',v)$ where $v'$ is
    the child of $v$ on the heavy path.
    Then we insert to $A$ all the leaves in the subtrees of $\T$ corresponding to children
    of $v$ other than $v'$; their probabilities in $A$ are the values of $\piback$.

    Next, we try to set the $\eend$-pointers to the elements of $A$ and their ancestors.
    If $v$ is the root, we simply set the pointers to the root to all the elements of $A$ and
    their ancestors.
    Otherwise, let $w \in H$ be the parent of $v$.
    We iterate through all the elements $u \in A$ and for each of them check if $\pi(u,w) = \pi(u,v) \pi(v,w)$ is at least \fr.
    If so, we simply leave $u$ in $A$ for the next iterations.
    Otherwise, we set $\eend(u)=v$. 
    If $u$ was the last child of its parent for which we computed the $\eend$-pointer,
    we add the parent of $u$ to $A$.
    In order to efficiently check this condition, each node counts its children whose $\eend$-pointer is yet to be determined. 

    The correctness of the algorithm follows from \cref{obs:order}.
    The running time is proportional to the total number of times a node from $A$ is visited.
    When a node $v \in H$ is considered, for each node $u \in A$ either its $\eend$-pointer is set, which
    obviously happens at most $|\T| = \Oh(nz)$ times in total, or $\str(u,v)$ corresponds to a left-maximal solid factor
    ending at position corresponding to the level of $v$ in $\T$, which can happen at most $z$
    times by \cref{fct:maximal}.
    This implies $\Oh(nz)$ time complexity of the whole algorithm.
  \mayqed\end{proof}

  \section{Construction of the Weighted Index}\label{sec:index}
  Our index for a weighted sequence $X$ is based on the compacted trie of all maximal solid factors of $X$.
  We first show how this compacted trie can be constructed from the suffix tree $\S(\T)$ of the solid factor trie $\T$.
  Next, we describe in detail all the components of the resulting weighted index.

  \subsection{Compacted trie of maximal solid factors}\label{subsec:trie}
  First of all, from \cref{fct:suftree_of_tree,lem:size} we obtain an efficient construction of $\S(\T)$:

  \begin{lemma}\label{lem:suffix_tree_of_T}
    The suffix tree of the solid factor trie can be constructed in $\Oh(nz)$ time.
  \end{lemma}
  We assume that each terminal node of $\S(\T)$ stores, as its label, the starting position in $X$
  of the string from $\E$ that it represents.

  The trie $\T$ represents more than the (maximal) solid factors of $X$, and so does $\S(\T)$.
  However, the $\len$-values that we computed in $\T$ let us delimit the maximal solid factors.
  Using them we can transform $\S(\T)$ into a compacted trie $\T'$ of all maximal solid factors of $X$.
  In this compacted trie a terminal's label is a list of starting positions in $X$ of occurrences of the corresponding
  maximal solid factor.

  \begin{theorem}
    A compacted trie $\T'$ of all maximal solid factors of a weighted sequence $X$ of length $n$ can be constructed in $\Oh(nz)$ time.
  \end{theorem}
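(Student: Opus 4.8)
The plan is to obtain $\T'$ by \emph{pruning} the suffix tree $\S(\T)$ at the depths prescribed by the $\len$-values. Recall that each root-to-terminal path of $\S(\T)$ spells an extension $\str(u,\root)\in\E$, and that the maximal solid factor it contains, namely $M_u=\str(u,\eend(u))$, is exactly the prefix of this extension of length $\len(u)$. Hence every maximal solid factor of $X$ is the string spelled from the root of $\S(\T)$ down to the (possibly implicit) point at string-depth $\len(u)$ on the path leading to the terminal $u$. The compacted trie $\T'$ is therefore the portion of $\S(\T)$ lying at or above these cut points, with the cut points marked as terminals.

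First I would attach $\mathrm{pos}(u)$ and $\len(u)$ to every terminal of $\S(\T)$ (available from \cref{lem:suffix_tree_of_T,lem:end_len}) and compute, by a single bottom-up traversal, the value $\maxlen(x)=\max\{\len(u): u\text{ a terminal in the subtree of }x\}$ for every explicit node $x$; this takes $\Oh(nz)$ time. A point at string-depth $d$ on the edge entering an explicit node $x$ spells a prefix of some maximal solid factor precisely when $d\le\maxlen(x)$, so in a top-down pass I would keep each edge $(\mathrm{parent}(x),x)$ only up to depth $\min(\depth(x),\maxlen(x))$: if $\maxlen(x)\ge\depth(x)$ the whole edge and the node $x$ survive, otherwise the edge is cut at depth $\maxlen(x)$, a new leaf is created there, and the subtree of $x$ (which consists solely of heavy-string tails) is discarded. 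This yields the compacted trie of the prefix closure of the maximal solid factors.

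It remains to mark every maximal solid factor as a terminal of $\T'$, including those that are proper prefixes of longer ones. For each terminal $u$ of $\S(\T)$ the cut point at depth $\len(u)$ lies on the edge $(\mathrm{parent}(x),x)$ with $\depth(\mathrm{parent}(x))<\len(u)\le\depth(x)$; since $u$ lies in the subtree of $x$ we have $\len(u)\le\maxlen(x)$, so this point is within the kept range and is a legitimate node of $\T'$. I would locate these points for all terminals at once, either by offline weighted-ancestor queries on $\S(\T)$ or, equivalently, by radix-sorting the $\Oh(nz)$ requests by their (integer, at most $n$) depth and realising them as chains of explicit nodes in a single traversal; both run in $\Oh(nz)$ time. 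Whenever several terminals $u$ yield equal maximal solid factors $M_u$ their cut points coincide, since a compacted trie has a unique locus for a given string, so the corresponding node of $\T'$ stores the merged list of their starting positions $\mathrm{pos}(u)$.

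The main obstacle is this last step: the cut points generally fall in the interior of edges of $\S(\T)$ (implicit nodes), so correctly delimiting and materialising all maximal solid factors—while merging the occurrence lists of equal factors and keeping the total work linear in $|\S(\T)|$—is the delicate part, whereas the pruning and the $\maxlen$ computation are straightforward linear passes. Since $\S(\T)$ has $\Oh(nz)$ nodes, the numbers of terminals, of cut points, and of stored positions are all $\Oh(nz)$, so both the size of $\T'$ and the total construction time are $\Oh(nz)$.
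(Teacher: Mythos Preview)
Your approach is correct and shares the paper's core idea: compute $\maxlen$ for every explicit node of $\S(\T)$ and trim below these thresholds. The difference lies in how the terminals of $\T'$ are installed. The paper does this in a single bottom-up pass over the \emph{irrelevant} explicit nodes (those with $\maxlen(w)<\depth(w)$): if $\maxlen(w)\le d$, where $d$ is the depth of the parent, $w$ is deleted and its accumulated position-labels are appended to the parent; otherwise the edge into $w$ is cut at depth $\maxlen(w)$ and $w$ itself is moved there, becoming a relevant leaf. Thus pruning and label-lifting happen simultaneously, with no auxiliary ancestor machinery. You instead prune first and then, as a separate step, locate the locus at depth $\len(u)$ for every terminal $u$ via offline weighted-ancestor queries or a depth-bucketed sort. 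This is slightly heavier but equally valid and still $\Oh(nz)$; it also makes explicit the case of a maximal solid factor that is a proper prefix of another, which the paper's single-pass description leaves implicit.
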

  \begin{proof}
    We start by constructing the solid factor trie $\T$ of $X$, together with the $\len$-values, and
    its suffix tree $\S(\T)$.
    By \cref{thm:solid_factor_trie,lem:end_len,lem:suffix_tree_of_T}, these steps take $\Oh(nz)$ time.
    Now it suffices to properly trim $\S(\T)$.
    For a terminal node $v$ in $\S(\T)$ corresponding to $\str(u,\root)$ in $\T$, as $\len(v)$ we store $\len(u)$.
    Then we need to ``lift'' such a terminal node to depth $\len(v)$ in $\S(\T)$.
    In practice we proceed it as follows.
    
    For an (explicit or implicit) node $u$ of $\S(\T)$, by $\maxlen(u)$
    we denote the maximum value of $\len(v)$ for a descendant terminal node $v$.
    As a result of trimming we leave only those (explicit or implicit) nodes $u$ for which
    $\maxlen(u)$ is at least as big as their depth in the trie; we call such nodes \emph{relevant} nodes
    and the remaining nodes \emph{irrelevant} nodes.

    This procedure can be implemented in linear time.
    Indeed, the $\maxlen$-values for all explicit nodes can be computed with a single bottom-up traversal.
    In another bottom-up traversal, we consider all irrelevant explicit nodes.
    Let $w$ be an irrelevant node and let $v$ be its parent.
    Assume that $v$ is located at depth $d$.
    If $\maxlen(w) \le d$, $w$ is removed from $\S(\T)$ and its label is appended to its parent's label.
    Otherwise, we cut the edge connecting $v$ and $w$ at depth $\maxlen(w)$ and move the irrelevant node $w$ there,
    making it relevant.
  \mayqed\end{proof}

  \subsection{The Weighted Index}\label{subsec:index}
  As already mentioned, our weighted index is based on the compacted trie $\T'$ of all maximal solid factors of $X$.
  We also need to store the solid factor trie $\T$ which lets us access the string labels of the edges
  of the compacted trie.
  For convenience we extend each maximal solid factor in $\T'$ by a symbol $ \$\not\in \Sigma$.
  As a result, each maximal solid factor corresponds to a leaf in $\T'$ which is labeled with a list
  of starting positions of its occurrences in $X$.

  We assume left-to-right orientation of the children of each node (e.g., lexicographic).
  A global occurrence list $\OL$ is stored being a concatenation of the lists of occurrences in
  all the leaves of the trie $\T'$ in pre-order.
  Each node $v$ stores, as $\OL(v)$, the occurrence list of leaves in its subtree represented as a pair of pointers
  to elements of the global list $\OL$.
  We enhance the occurrence list $\OL$ by a data structure for the following colored range listing problem.

  \begin{problem}{\textbf{Problem 2.} Colored range listing}
    Preprocess a sequence $A[1..N]$ of elements from $[1..S]$ so that, given a range $A[i..j]$,
    one can list all the distinct elements in that range.
  \end{problem}

  \begin{fact}[Muthukrishnan~\cite{Muthukrishnan2002}]
    A data structure for the colored range listing problem of $\Oh(N)$ size can be constructed
    in $\Oh(N+S)$ time and answers queries in $\Oh(k+1)$ time where $k$ is the number of distinct elements
    reported.
  \end{fact}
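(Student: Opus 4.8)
The plan is to reduce colored range listing to a batch of range-minimum queries on an auxiliary ``previous occurrence'' array, following the standard document-retrieval technique. First I would define, for each position $p \in [1..N]$, the value $\mathrm{prev}[p] = \max\{p' < p : A[p'] = A[p]\}$, with $\mathrm{prev}[p] = 0$ when no earlier occurrence of the color $A[p]$ exists. The key structural claim is that the distinct elements of $A[i..j]$ are in bijection with the positions $p \in [i..j]$ satisfying $\mathrm{prev}[p] < i$. Indeed, for any color occurring in the range, its leftmost occurrence $p$ in $[i..j]$ has no earlier occurrence inside the range, so $\mathrm{prev}[p] < i$; conversely, any $p \in [i..j]$ with $\mathrm{prev}[p] < i$ must be the leftmost occurrence of $A[p]$ in the range. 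Thus each distinct color is represented exactly once, by the position of its first appearance in the queried window.

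Next I would equip $\mathrm{prev}$ with a range-minimum-query structure and answer a query on $[i..j]$ recursively. I compute $p^\ast = \arg\min_{p \in [i..j]} \mathrm{prev}[p]$; since $\mathrm{prev}[i] < i$ always holds, any nonempty range has minimum value below $i$, so $p^\ast$ is a qualifying position. I report $A[p^\ast]$ and recurse on the two subranges $[i..p^\ast-1]$ and $[p^\ast+1..j]$, pruning a branch as soon as the minimum of $\mathrm{prev}$ over a subrange is at least $i$ (or the subrange is empty). Every successful query emits a genuinely new distinct color and spawns two children, while every unsuccessful query is a leaf of the recursion, so the recursion tree is binary with $k$ internal nodes and $\Oh(k+1)$ leaves. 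Each node costs one $\Oh(1)$-time range-minimum query, giving the claimed $\Oh(k+1)$ query time.

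For construction, the array $\mathrm{prev}$ is built by a single left-to-right scan that maintains, in a table indexed by color in $[1..S]$, the last position at which each color was seen; initializing this table costs $\Oh(S)$ and the scan costs $\Oh(N)$, for $\Oh(N+S)$ total. A standard $\langle \Oh(N), \Oh(1)\rangle$ range-minimum structure over $\mathrm{prev}$ adds $\Oh(N)$ time and space, and crucially the size-$S$ auxiliary table is discarded after preprocessing, so the stored data structure occupies only $\Oh(N)$ space. The main obstacle is the query-time analysis: the whole efficiency rests on arguing that the recursion never wastes work, i.e.\ that every internal node of the recursion tree produces a fresh color (so there are at most $k$ of them) and hence the total number of range-minimum calls is $\Oh(k+1)$ rather than depending on the range length; pinning down this charging argument, together with confirming that the additive $\Oh(S)$ appears only in the construction bound and not in the space bound, is the part that needs care.
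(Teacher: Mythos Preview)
The paper does not prove this statement at all; it is quoted as a black-box fact with a citation to Muthukrishnan. Your proposal correctly reconstructs the original argument from that reference---the $\mathrm{prev}$ array, the bijection between distinct colors in $A[i..j]$ and positions $p$ with $\mathrm{prev}[p]<i$, and the recursive range-minimum splitting---and the construction, space, and query-time analyses are all sound.
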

  For all nodes of $\T'$ we also compute the following values (for the purpose of this computation we replace each leaf $v$
  with $|\OL(v)|$ bogus leaves with single occurrences).

  \begin{fact}[Color set size, Hui~\cite{DBLP:conf/cpm/Hui92}]\label{fct:count}
    Given a rooted tree of size $N$ with $L$ leaves colored from $[1..S]$, in $\Oh(N+S)$ time 
    one can find for each node $u$ the number of distinct leaf colors in the subtree of $u$.
  \end{fact}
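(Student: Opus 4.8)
The plan is to reduce the per-node color counts to a single bottom-up summation over node weights, exploiting the classical observation that, under a fixed depth-first ordering of the tree, the leaves of any single color that fall inside a given subtree form a contiguous block. First I would fix a depth-first traversal and order the leaves from left to right accordingly. For each color $c$, let $\ell_1,\ldots,\ell_k$ be its leaves in this order, and for each consecutive pair set $w_i = \lca(\ell_i,\ell_{i+1})$.

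The core identity I would establish is that, for every node $u$, the number of distinct leaf colors in the subtree of $u$ equals the number of leaves in that subtree minus the number of pairs $(\ell_i,\ell_{i+1})$ (taken over all colors) whose meeting point $w_i$ lies in the subtree of $u$. To prove it, it suffices to argue color by color. If exactly $m \ge 1$ leaves of color $c$ lie in the subtree of $u$, then by contiguity they are $\ell_a,\ldots,\ell_{a+m-1}$, so precisely the $m-1$ internal consecutive pairs have both endpoints—and hence their LCA—inside the subtree, whereas each of the two straddling boundary pairs has one endpoint outside the subtree, forcing its LCA to be a proper ancestor of $u$ and therefore outside the subtree. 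Thus color $c$ contributes $m-(m-1)=1$ to the right-hand side, matching its single contribution to the distinct count, while colors absent from the subtree contribute $0$ on both sides.

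Given the identity, I would assign each leaf a weight $+1$ at its own node and each consecutive pair a weight $-1$ at its meeting node $w_i$; the answer at $u$ is then just the sum of weights over the subtree of $u$, which one bottom-up pass computes in $\Oh(N)$ time. The remaining work is to produce the weights within the stated budget. The per-color leaf lists, already in depth-first order, come essentially for free by bucketing leaves by color during the traversal, which costs $\Oh(N+S)$ (the $S$ term merely initialises the buckets). There are $\Oh(L)=\Oh(N)$ consecutive pairs in total, and each required LCA is answered in $\Oh(1)$ after an $\Oh(N)$-time preprocessing of a constant-query-time LCA structure.

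The main obstacle is the contiguity argument underpinning the identity: one must be precise that a depth-first left-to-right leaf order makes the leaves of any fixed color inside a subtree consecutive, so that exactly $m-1$ same-color pairs are ``absorbed'' and straddling pairs never contribute. Once this is nailed down, the complexity bookkeeping—bucketing to obtain the sorted lists in $\Oh(N+S)$ and linear-time LCA preprocessing to keep the pair processing linear—is routine.
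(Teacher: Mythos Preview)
Your argument is correct and is precisely Hui's original technique: order the leaves by DFS, bucket them by color, subtract one at the LCA of each consecutive same-colored pair, and finish with a single bottom-up sum. The paper itself does not prove this statement at all---it merely cites it as a known result---so your write-up actually supplies the proof that the paper defers to the literature.
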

  We denote the resulting data structure as $\I$.

  \begin{theorem}
    The index $\I$ for a weighted sequence $X$ can be constructed in $\Oh(nz)$ time.
    It answers decision and counting variants of weighted pattern matching queries in $\Oh(m)$ time,
    and, if required, reports all occurrences of the pattern in $\Oh(m+|\Occ(P,X)|)$ time.
  \end{theorem}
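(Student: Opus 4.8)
The plan is to assemble the index $\I$ from the components already built — the solid factor trie $\T$ (\cref{thm:solid_factor_trie}), its suffix tree $\S(\T)$ (\cref{lem:suffix_tree_of_T}), the compacted trie $\T'$ of maximal solid factors together with the $\len$-values, the global list $\OL$ with the pointer pair $\OL(v)$ stored at each node, and the two range structures (colored range listing and color set size) — and to argue that each piece costs $\Oh(nz)$. The only new quantity to bound is $N := |\OL|$. Since $\OL$ concatenates, over all leaves of $\T'$, the occurrence lists, its length equals the number of (position, maximal solid factor) incidences, which is at most $nz$ by \cref{fct:maximal}. With $N = \Oh(nz)$ and $S = n$ (positions playing the role of colors), the facts of Muthukrishnan and Hui build their structures in $\Oh(N+S)=\Oh(nz)$ time, and replacing each leaf $v$ by $|\OL(v)|$ bogus leaves keeps the tree size $\Oh(nz)$. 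Summing over all components yields the claimed $\Oh(nz)$ construction bound.

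For queries, the key step is spelling the pattern $P$ downward from the root of $\T'$. First I would observe that reading a root-to-locus path in $\T'$ letter by letter amounts to walking the corresponding upward path in the base trie $\T$ (each extension $\str(u,\root)\in\E$ is read by ascending from $u$), so each successive letter of an edge label is obtained in $\Oh(1)$ from a parent pointer in $\T$; since $\sigma=\Oh(1)$, branching at each explicit node of $\T'$ also costs $\Oh(1)$. Hence $P$ can be matched, or declared unmatchable, in $\Oh(m)$ time. Correctness of the decision query follows because $P$ occurs at position $i$ iff $P$ is a solid factor there iff $P$ is a prefix of some maximal solid factor at $i$, and $\T'$ spells exactly the maximal solid factors from the root; thus $\Occ(P,X)\neq\emptyset$ iff the spelling succeeds.

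For counting and reporting, let $w$ be the lower explicit endpoint of the edge containing the locus where $P$ ends (or the locus itself, if it is explicit). The occurrences of $P$ are precisely the starting positions labelling the leaves of the subtree of $w$, but a single position may label several leaves, so the answer is the number of \emph{distinct} such positions. I would therefore return the precomputed color-set-size value at $w$ for the counting query (\cref{fct:count}), giving $\Oh(m)$ time in total, and run a colored range listing query on the range $\OL(w)$ of the global list for reporting, which enumerates the distinct positions in $\Oh(k+1)$ time with $k=|\Occ(P,X)|$, for $\Oh(m+|\Occ(P,X)|)$ overall.

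The main obstacle I anticipate is making the $\Oh(m)$ spelling step rigorous: one must verify that the edge labels of $\T'$, stored only as pointers into $\T$, can be read one letter at a time in constant time and in the correct left-to-right orientation, and that the loci reached after the trimming that produced $\T'$ still correspond faithfully to prefixes of maximal solid factors. The remaining subtlety is conceptual rather than technical — recognizing that occurrences must be counted by distinct position, which is exactly what routes the counting and reporting queries through the color-based structures rather than through raw leaf counts.
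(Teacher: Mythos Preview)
Your proposal is correct and follows essentially the same approach as the paper's proof, which is very terse: it simply notes that $\T'$ handles the decision query in $\Oh(m)$ time, that \cref{fct:count} gives the counting variant, and that colored range listing handles reporting. You in fact supply more of the supporting detail than the paper does --- the explicit bound $|\OL|\le nz$ from \cref{fct:maximal}, the mechanics of reading edge labels via parent pointers in $\T$, and the observation that duplicates among leaf labels are what force the use of the color-based structures --- but none of this deviates from the intended argument.
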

  \begin{proof}
    The compacted trie $\T'$ can answer queries if $\Occ(P,X) \ne \emptyset$
    in $\Oh(m)$ time. We can use \cref{fct:count} to equip each explicit node with the number
    of positions where the string represented by the node occurs. This way, $|\Occ(P,X)|$
    can also be determined in $\Oh(m)$ time.
    With the aid of the data structure  for colored range listing, we can also report $\Occ(P,X)$ in time
    proportional to the number of reported elements.
  \mayqed\end{proof}

  \section{Applications of the Weighted Index}\label{sec:applications}
  In this section we present two non-trivial applications of the weighted index.
  In both cases we improve the time complexity of the previously known results by a factor of $z \log z$.

  \subsection{Weighted longest common prefixes and weighted prefix table}
  For a weighted sequence $X$ of length $n$ and a pair of indices $i,j$, $1 \le i,j \le n$, by $\wlcp(i,j)$
  we denote the length of the longest solid factor that occurs in $X$ at both positions $i$ and $j$.
  After some preprocessing our weighted index allows to answer such
  queries in $\Oh(z)$ time.

  \begin{theorem}\label{thm:wlcp}
    Given a weighted sequence $X$ of length $n$, after $\Oh(nz)$-time preprocessing we can answer $\wlcp(i,j)$
    queries for any $1 \le i,j \le n$ in $\Oh(z)$ time.
  \end{theorem}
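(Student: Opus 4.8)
The plan is to reduce each $\wlcp(i,j)$ query to a small number of longest-common-prefix computations on maximal solid factors, which the index $\I$ already supports. The key observation is that although positions $i$ and $j$ may each start up to $z$ maximal solid factors (by \cref{fct:maximal}), the longest factor occurring at \emph{both} positions must be a prefix of some maximal solid factor at $i$ and simultaneously a prefix of some maximal solid factor at $j$. Thus $\wlcp(i,j)$ equals the maximum, over all pairs $(F_i, F_j)$ of maximal solid factors starting at $i$ and $j$ respectively, of the length of the longest common prefix of $F_i$ and $F_j$. This naively gives $z^2$ pairs, so the main work is to organise the computation so that only $\Oh(z)$ pairs matter.

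\textbf{Preprocessing.} First I would build the index $\I$ in $\Oh(nz)$ time, giving us the compacted trie $\T'$ of all maximal solid factors together with the solid factor trie $\T$. For each starting position $i$ of $X$, I would record the (at most $z$) leaves of $\T'$ corresponding to maximal solid factors beginning at $i$; these lists have total size $\Oh(nz)$ and can be extracted from the occurrence lists $\OL$ attached to the leaves. To answer longest-common-prefix queries \emph{between} two leaves of $\T'$, I would preprocess $\T'$ for constant-time lowest-common-ancestor ($\lca$) queries: the string-depth of $\lca(u,v)$ in $\T'$ is exactly the length of the longest common prefix of the two maximal solid factors represented by $u$ and $v$, since $\T'$ is a compacted trie reading factors top-down. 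All of this fits within the $\Oh(nz)$ preprocessing budget.

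\textbf{Query.} Given $i$ and $j$, I would retrieve the two lists of maximal-solid-factor leaves and, for the desired $\Oh(z)$ bound, take advantage of the fact that for a fixed leaf $u$ representing a factor at $i$, the length of the longest common prefix with the factors at $j$ is maximised by the leaf of $j$ that is \emph{closest} to $u$ in $\T'$ (equivalently, whose path shares the deepest branching node). So I would iterate over the $\Oh(z)$ leaves from position $i$, and for each one locate the best-matching leaf from position $j$; the overall answer is the maximum string-depth over these pairs. The candidate factors at a position naturally form a subtree structure in $\T'$, so with the leaves presorted in the Euler/pre-order of $\T'$ the best counterpart can be found in amortised constant time per leaf, yielding $\Oh(z)$ total per query.

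\textbf{The hard part} will be establishing that only $\Oh(z)$ pairs of leaves, rather than all $z^2$, need to be examined, and that the best counterpart for each leaf of $i$ can be identified in constant time. The cleanest way I expect to make this rigorous is to merge the two pre-ordered leaf lists (for positions $i$ and $j$) into a single order along $\T'$ and observe that the deepest shared ancestor for any leaf of $i$ is realised by one of its immediate neighbours of colour $j$ in this merged order; a single linear scan with $\lca$ queries then suffices. Correctness rests on the prefix-closure of $\E$ recorded in \cref{obs:cut} together with \cref{fct:maximal}, which bounds the number of candidate factors at each position by $z$.
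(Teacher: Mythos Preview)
Your proposal is correct and matches the paper's own proof essentially step for step: build $\I$, record for each position its at most $z$ leaves in pre-order, preprocess $\T'$ for $\Oh(1)$ $\lca$ queries, and at query time merge the two pre-ordered leaf lists and take the maximum $\lca$-depth over adjacent pairs. The paper justifies the ``only adjacent pairs matter'' step via the standard identity $\depth(\lca(l_1,l_3))=\min(\depth(\lca(l_1,l_2)),\depth(\lca(l_2,l_3)))$ for leaves in pre-order, which is exactly the fact underlying your merged-list argument; your reference to \cref{obs:cut} is not actually needed here.
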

  \begin{proof}
    For each position $i$ in $X$ we precompute the list of leaves $L(i)$ of the weighted index $\I$
    that contain $i$ in their occurrence lists.
    Prior to that, all leaves are numbered in pre-order, and the elements of $L(i)$ are stored in this order.
    By \cref{fct:maximal}, $|L(i)| \le z$ for each $i$.

    Observe that $\wlcp(i,j)$ is the maximum depth of a lowest common ancestor ($\lca$) of a leaf in $L(i)$ and a leaf in $L(j)$. 
    To determine this value, we merge the lists $L(i)$ and $L(j)$ according to the pre-order.
    The claim below (Lemma~4.6~in \cite{AlgorithmsOnStrings}) implies that, computing
    $\wlcp(i,j)$, it suffices to consider pairs of leaves that are adjacent in the resulting list.

    \begin{claim}
      If $l_1$, $l_2$ and $l_3$ are three leaves of a (compacted) trie such that $l_2$ follows $l_1$
      and $l_3$ follows $l_2$ in pre-order, then $\depth(\lca(l_1,l_3)) = \min(\depth(\lca(l_1,l_2)),\depth(\lca(l_2,l_3)))$.
    \end{claim}%
    Merging two sorted lists, each of length at most $z$, takes $\Oh(z)$ time.
    Finally let us recall that $\lca$-queries in a tree can be answered in $\Oh(1)$ time after linear-time
    preprocessing \cite{LCA,DBLP:journals/siamcomp/HarelT84}.
  \mayqed\end{proof}

  The weighted prefix table $\PREF[1..n]$ of $X$ is defined as $\PREF[i] = \wlcp(1,i)$; see \cite{DBLP:conf/cwords/BartonP15}.
  As a consequence of \cref{thm:wlcp} we obtain an $\Oh(nz)$-time algorithm for computing this table.
  It outperforms the algorithm of \cite{DBLP:conf/cwords/BartonP15}, which works in $\Oh(n z^2 \log z)$ time.

  \begin{theorem}
    The weighted prefix table $\PREF$ of a given weighted sequence of length $n$ can be computed in $\Oh(nz)$ time.
  \end{theorem}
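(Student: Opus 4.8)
The plan is to read the weighted prefix table directly off the $\wlcp$ machinery of \cref{thm:wlcp}. By definition $\PREF[i] = \wlcp(1,i)$, so once we can answer arbitrary $\wlcp$ queries quickly, filling the entire table amounts to issuing one query per position with the first argument fixed to $1$.

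Concretely, I would first run the $\Oh(nz)$-time preprocessing of \cref{thm:wlcp}: build the weighted index $\I$, precompute for every position $i$ the pre-order leaf list $L(i)$ (each of size at most $z$ by \cref{fct:maximal}), and set up constant-time $\lca$ queries on $\T'$. This one-off cost is $\Oh(nz)$. Then I would loop over $i = 1, \ldots, n$ and set $\PREF[i] := \wlcp(1,i)$, where each query merges $L(1)$ with $L(i)$ in pre-order and takes the maximum $\lca$-depth over adjacent pairs in $\Oh(z)$ time, as established in \cref{thm:wlcp}. Summing over all $n$ positions contributes $\Oh(nz)$ for the queries, so the overall running time is $\Oh(nz) + n \cdot \Oh(z) = \Oh(nz)$, as claimed.

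I do not anticipate any real obstacle here: all of the difficulty has already been absorbed into the construction of $\I$ and the $\wlcp$-query procedure of \cref{thm:wlcp}. The only point worth noting is that $L(1)$ is reused unchanged across all $n$ queries rather than recomputed, which is immediate since the preprocessing stores every $L(i)$ explicitly; this keeps each individual query at $\Oh(z)$ and avoids any hidden per-query overhead that would spoil the bound.
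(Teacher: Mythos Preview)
Your proposal is correct and matches the paper's approach exactly: the paper simply states this theorem as an immediate consequence of \cref{thm:wlcp}, computing $\PREF[i]=\wlcp(1,i)$ for each $i$ after the $\Oh(nz)$ preprocessing, with $n$ queries at $\Oh(z)$ each giving $\Oh(nz)$ total.
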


  \subsection{Efficient computation of covers}\label{sec:covers}
  A \emph{cover} of a weighted sequence $X$ is a string $P$ whose occurrences as solid factors of $X$ cover all positions in $X$;
  see \cite{costas_weighted_suffix_tree_j}.
  More formally, if we define $\maxgap$ of an ordered set $A = \{a_1,\ldots,a_k\}$ (with $a_1 < \ldots < a_k$) as
  $$\maxgap(A) = \max\{a_i - a_{i-1}\,:\, i=2,\ldots,k\},$$
  then $P$ is a cover of $X$ if and only if
  $$1 \in \Occ(P,X) \quad\mbox{and}\quad \maxgap (\Occ(P,X) \cup \{n+1\}) \le |P|.$$
  Note that the former condition means exactly that $P$ is a solid prefix of $X$.
  An $\Oh(n)$-time algorithm computing a representation of all the covers of a weighted sequence
  under the assumption that $z = \Oh(1)$ was presented in \cite{costas_weighted_suffix_tree_j}.
  Here we show an algorithm that works in $\Oh(nz)$ time.

  The algorithm of \cite{costas_weighted_suffix_tree_j} uses a data structure (which we denote here as $\D$) to store a multiset
  of elements $A$ from the set $\{2,\ldots,n\}$ allowing three operations:
  \begin{enumerate}[(a)]
    \item initialisation with a given multiset of elements $A$;
    \item computing $\maxgap(\D) = \maxgap(A \cup \{1,n+1\})$ for the currently stored multiset $A$;
    \item removing a specified element from the currently stored multiset $A$.
  \end{enumerate}
  The data structure has $\Oh(n)$ size, executes operation (a) in $\Oh(|A|+n)$ time and supports operations
  (b) and (c) in constant time.
  It consists of: (1) an array $C[1..n+1]$ that counts the multiplicity of each element;
  (2) a list $L$ that stores all distinct elements of $A \cup \{1,n+1\}$ in ascending order and retains its maxgap; and
  (3) an array $P[1..n+1]$ that stores, for each distinct element of $A \cup \{1,n+1\}$, a pointer to its occurrence in $L$.

  The algorithm of \cite{costas_weighted_suffix_tree_j}, formulated in terms of our index $\I$,
  works as follows.
  For a node $v$ let $\D(v)$ be the $\D$-data structure storing the multiset $\OL(v) \setminus \{1\}$.
  The path from the root to each terminal node that represents a \emph{maximal solid prefix} of $X$ is traversed,
  and at each explicit node $v$ the data structure $\D(v)$ is computed.
  When we move from a node $v$ to its child $w$ on the path, from $\D(v)$ we remove all elements
  from $\OL(w')$ for $w'$ being children of $v$ other than $w$.
  Afterwards for the node $w$ we perform the following check, which we call $\covercheck(w)$:
  if $\maxgap(\D(w)) \le \depth(w)$, report the covers being prefixes of $\str(w)$ of length
  $[\max(\maxgap(\D(w)),\depth(v)+1)..\depth(w)]$.
  The whole procedure works in $\Oh(nz^2)$ time, as a single traversal works in linear time w.r.t.\ the
  size of the index and there are at most $z$ maximal solid prefixes of $X$ (\cref{fct:maximal}).

  Let us show how this algorithm can be implemented to run in $\Oh(nz)$ time.
  We will call an explicit node of $\I$ a \emph{prefix node} if it corresponds to a solid prefix of $X$.
  To implement the solution, it suffices for each prefix node to compute the $\D$-data structure
  and apply the $\covercheck$ routine.
  A prefix node will be called \emph{branching} if it has more than one child being a prefix node,
  and \emph{starting} if it is the root or its parent is branching.
  A maximal path going down the trie from a starting prefix node and passing only through non-starting prefix nodes
  will be called a \emph{covering} path.
  Considering the prefix node subtree of $\I$, which contains at most $z$ leaves and, consequently, at most $z-1$
  branching nodes, we make the following easy but important observation.

  \begin{observation}\label{obs:starting}
    There are $\Oh(z)$ covering paths and each prefix node belongs to exactly one of them.
  \end{observation}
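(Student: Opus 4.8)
The plan is to analyze the \emph{prefix node subtree} of $\I$, i.e.\ the subtree induced by all prefix nodes. The first step is to observe that prefix nodes are closed towards the root: if $P$ is a solid prefix of $X$ then so is every prefix $P'$ of $P$, since $\P(P',X[1..|P'|])$ arises from $\P(P,X[1..|P|])$ by dropping trailing factors, each at most $1$, and can therefore only grow. Consequently the prefix nodes form a connected subtree rooted at $\root$. Each of its leaves is a solid prefix that extends to some maximal solid prefix of $X$, and distinct leaves yield distinct maximal solid prefixes (two leaves whose strings are prefixes of a common maximal solid prefix would be comparable, so one would be an ancestor of the other and thus not a leaf). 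By \cref{fct:maximal} there are at most $z$ maximal solid prefixes, so the prefix node subtree has at most $z$ leaves and hence at most $z-1$ branching nodes (a rooted tree with $L$ leaves has at most $L-1$ nodes with two or more children).

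Next I would prove that the covering paths partition the prefix nodes. For a prefix node $v$, let $s(v)$ be its nearest starting ancestor (allowing $v=s(v)$); this is well defined because $\root$ is starting. I claim $v$ lies on the covering path headed by $s(v)$ and on no other. The crux is to show that no node strictly between $s(v)$ and $v$ is branching: if some interior node $u$ were branching, then all children of $u$ would be starting, so the child of $u$ on the path to $v$ would be a starting ancestor of $v$ strictly below $s(v)$, contradicting the minimality of $s(v)$. Hence the downward path from $s(v)$ to $v$ runs through non-starting (and therefore non-branching) prefix nodes only, so it is an initial segment of the covering path headed by $s(v)$, placing $v$ on that path. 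Uniqueness is immediate: the head of any covering path is a starting node reached from $v$ by ascending only through non-starting nodes, which forces the head to equal $s(v)$.

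Finally I would bound the number of covering paths by mapping each one to its deepest node. By maximality a covering path cannot be extended downward; since a non-branching prefix node has at most one prefix-child, which is non-starting and would extend the path, the deepest node of every covering path must be either a leaf of the prefix node subtree or a branching node. Because the covering paths are vertex-disjoint by the partition just established, this assignment is injective, so the number of covering paths is at most the number of leaves plus the number of branching nodes, i.e.\ at most $z+(z-1)=\Oh(z)$.

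The main obstacle is the middle step: correctly reconciling the ``nearest starting ancestor'' description with the maximal-path definition of a covering path, in particular ruling out branching nodes in the interior of a path so that covering paths are genuine disjoint chains rather than overlapping subtrees. Once this is settled, both the partition and the $\Oh(z)$ count are routine tree-counting arguments resting on the bound of $z$ leaves obtained from \cref{fct:maximal}.
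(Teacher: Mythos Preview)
Your proposal is correct and follows the same route the paper sketches in one sentence before the observation: bound the number of leaves of the prefix-node subtree by $z$ via \cref{fct:maximal}, deduce at most $z-1$ branching nodes, and conclude. You simply supply the details the paper leaves implicit---that prefix nodes are ancestor-closed, that distinct leaves map to distinct maximal solid prefixes, that covering paths partition the prefix nodes, and that each covering path terminates at a leaf or a branching node---so the two arguments are the same in substance.
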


  We will first show how to efficiently compute the $\D$-data structure
  for all starting prefix nodes and then argue that we can update the data structure efficiently along each covering path.

  \begin{lemma}\label{lem:starting}
    $\D(v)$ for all starting prefix nodes $v$ can be computed in $\Oh(nz)$ time.
  \end{lemma}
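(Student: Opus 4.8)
The plan is to produce, for every starting prefix node $v$, the count array underlying $\D(v)$, and then read off the remaining components in linear time. The obstacle to beat is that $|\OL(v)|$ may be as large as $\Oh(nz)$, so we cannot afford to enumerate the occurrences stored at each starting node separately: a direct ``one $\D$-initialisation per starting node'' approach costs $\Oh(nz^2)$. Instead I would exploit that the starting prefix nodes form a tree $B$ with only $\Oh(z)$ nodes (taking, as the parent of a starting node $v$, the starting node at the top of the covering path through the $\T'$-parent of $v$, which is legitimate by \cref{obs:starting}). For a starting node $v$ let $C_v[1..n+1]$ record, for each position $p$, the multiplicity of $p$ in $\OL(v) \setminus \{1\}$. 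Once $C_v$ is known, the sorted list of distinct elements, its maxgap, and the pointer array can all be produced by a single scan over $1..n+1$ (treating $1$ and $n+1$ as sentinels when accumulating the maxgap) in $\Oh(n)$ time, hence $\Oh(nz)$ over all starting nodes.

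To compute the arrays $C_v$ themselves, I would aggregate over $B$ from the leaves upward. First, assign each occurrence (a leaf of $\T'$ together with one of its occurrence positions) to the unique deepest starting prefix node that is its ancestor; this deepest starting ancestor can be precomputed for every leaf by a single downward traversal of $\T'$ in $\Oh(nz)$ time, and distributing the $\Oh(nz)$ occurrences (there are that many by \cref{fct:maximal}) accordingly partitions them among the starting nodes. For each starting node $v$ I then initialise $C_v$, zeroing the length-$(n+1)$ array and inserting only the occurrences directly assigned to $v$ while skipping position $1$; the zeroing costs $\Oh(n)$ per node and the insertions cost $\Oh(nz)$ in total. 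Finally, processing $B$ bottom-up (children before parents), I add each node's array entrywise into its parent's array. Since $B$ has $\Oh(z)$ edges and each addition costs $\Oh(n)$, this step runs in $\Oh(nz)$, and by induction $C_v$ ends up counting the occurrences of every leaf in the $\T'$-subtree of $v$, that is, exactly the multiset $\OL(v) \setminus \{1\}$.

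The correctness of the aggregation rests on the fact that a leaf lies in the $\T'$-subtree of $v$ if and only if its deepest starting ancestor lies in the $B$-subtree of $v$, so summing the directly-assigned counts over the $B$-subtree of $v$ reproduces $\OL(v)$. The main point to get right is the bookkeeping that keeps the per-node overhead at $\Oh(n)$ while charging every occurrence only $\Oh(1)$ times: each of the $\Oh(nz)$ occurrences is inserted once into a single $C_v$, and each of the $\Oh(z)$ arrays is initialised once and merged into its parent once. Summing the $\Oh(n)$-per-node and $\Oh(1)$-per-occurrence contributions gives the claimed $\Oh(nz)$ bound, and deriving the sorted list, its maxgap and the pointer array from the finished $C_v$ is routine.
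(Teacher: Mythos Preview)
Your proposal is correct and follows essentially the same approach as the paper: build one length-$n$ count array per starting prefix node (of which there are $\Oh(z)$), charge each of the $\Oh(nz)$ occurrences once to its deepest starting ancestor, and then sum arrays along the $\Oh(z)$ edges of the starting-node tree. The paper organizes the same computation slightly differently---it walks bottom-up through \emph{all} prefix nodes, reusing a single $C$-array along each covering path and only summing arrays at branching prefix nodes---but the key accounting (one $\Oh(n)$ array per covering path/starting node, each occurrence inserted once) is identical.
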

  \begin{proof}
    We compute the $C$-arrays of the data structures for all the prefix nodes $v$ from the bottom to the top,
    storing them for starting nodes only.
    In the end we shall construct the $L$-lists and $P$-arrays for the starting prefix nodes, which takes $\Oh(nz)$ time
    as there are at most $z$ starting prefix nodes (\cref{obs:starting}).
    
    For a leaf $v$ we initialise its $C$-array with $\OL(v)$ in $\Oh(|\OL(v)|+n)$ time.
    
    For a non-branching prefix node $v$ with prefix node child $w$, we take the $C$-array of the latter
    and insert all elements of $\OL(w')$ for the remaining children $w'$ of $v$.
    The time complexity is proportional to the sum of lengths of those lists $\OL(w')$.
    Note that we do not need to store the $C$-array of $w$ any more.
    
    Finally, for a branching prefix node $v$ we add up the counters of the $C$-arrays of all
    its prefix node children and insert all elements of $\OL(w')$ for the remaining children $w'$ of $v$.
    This time the complexity is proportional to the sum of lengths of the lists $\OL(w')$ and $n$ times
    the number of prefix node children.

    Let us analyse the time complexity of the algorithm.
    Note that each element of the global list $\OL$ will be added to some $C$-array at most once.
    Also when two or more $C$-arrays are added component-wise at a branching prefix node, all of them
    correspond to starting prefix nodes and each starting prefix node participates in such operation at most once.
    Moreover, note that we maintain exactly one array for each covering path.
    This way, the total time and space complexity is $\Oh(nz)$.
  \end{proof}

  The main part of the solution is actually quite similar to the computations of \cref{lem:starting}.
  This time, however, we move top-down along the tree, which allows us to compute maxgaps in the
  $\D$-data structures.

  \begin{lemma}\label{lem:maxgaps}
    The values $\maxgap(\D(v))$ for all prefix nodes can be computed in $\Oh(nz)$ time.
  \end{lemma}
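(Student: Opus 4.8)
The plan is to reuse the $\D$-structures produced by \cref{lem:starting} and to sweep each covering path from top to bottom, maintaining at the currently visited prefix node $v$ the invariant that $\D(v)$ stores the multiset $\OL(v)\setminus\{1\}$. At every prefix node I would then read off $\maxgap(\D(v))$ with a single call to operation~(b), which costs $\Oh(1)$; since each prefix node lies on exactly one covering path (\cref{obs:starting}), this computes every required value exactly once. The starting node of each covering path already carries the correct structure by \cref{lem:starting}, and the covering paths can be processed in any order because each of them only touches the $\D$-structure of its own starting node, so it remains to describe the update performed as the sweep descends one edge.

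First I would handle the descent step. Suppose the sweep moves from a node $v$ to the next node $w$ of the same covering path. Since $w$ is a non-starting prefix node, its parent $v$ is not branching, so $w$ is the unique prefix-node child of $v$; consequently $\OL(v)$ is the disjoint union of $\OL(w)$ and the lists $\OL(w')$ over the remaining children $w'$ of $v$, all of which are non-prefix-node children. To turn $\D(v)=\OL(v)\setminus\{1\}$ into $\D(w)=\OL(w)\setminus\{1\}$ I would delete, via operation~(c), every occurrence appearing in some $\OL(w')$; each such list is a contiguous block of the global array $\OL$, so its elements are enumerated directly (the value $1$, if present, is skipped as it is never stored in $\D$). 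Correctness of the sweep then follows by a straightforward induction along the path, with the base case supplied by \cref{lem:starting}.

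The main obstacle is the running-time analysis, and in particular bounding the total number of deletions: the naive estimate, summing $|\OL(s)|$ over the $\Oh(z)$ starting nodes, is as large as $\Oh(nz^2)$ and must be sharpened. The key claim I would prove is that each element of the global list $\OL$ is deleted at most once over the whole algorithm, mirroring the ``added at most once'' accounting of \cref{lem:starting}. Indeed, such an element sits in a unique leaf $\ell$ of the index, and it can be deleted only at a descent step out of a node $v$ for which $\ell$ lies in the subtree of a non-continuing child $w'$; because $w'$ is then a non-prefix node, no node in its subtree is a prefix node either, which forces $v$ to be the \emph{deepest} prefix-node ancestor of $\ell$. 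As this ancestor is unique and lies on a single covering path offering a single outgoing step, the element is deleted at most once, so all deletions together cost $\Oh(|\OL|)=\Oh(nz)$. Adding the $\Oh(1)$-time $\maxgap$ query at each of the $\Oh(nz)$ prefix nodes and the $\Oh(nz)$ cost of traversing the $\Oh(z)$ covering paths through an index of size $\Oh(nz)$ gives the claimed $\Oh(nz)$ bound.
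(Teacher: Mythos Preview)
Your proposal is correct and follows essentially the same approach as the paper: invoke \cref{lem:starting} to obtain $\D$ at each starting prefix node, sweep each covering path top-down while deleting the occurrence lists of the non-continuing children, and charge the total deletion cost by arguing that every element of $\OL$ is removed at most once. Your justification of the ``at most once'' bound via the deepest prefix-node ancestor is in fact more explicit than the paper's own proof, which simply asserts the bound without further argument.
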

  \begin{proof}
    We first perform the computations of \cref{lem:starting}.
    Let us fix a starting prefix node. 
    We compute its maxgap and go along the covering path that it belongs to, maintaining the $\D$-data structure.
    Every node on the path (except the last one) has exactly one prefix node child.
    Hence, when going from a prefix node $v$ to its child, prefix node $w$, we simply
    remove from $\D(v)$ the elements of $\OL(w')$ for all the remaining children $w'$ of $v$.
    This allows us to compute $\maxgap(\D(w))$.

    In total each element of the global occurrence list $\OL$ will be removed from a $\D$-data structure
    at most once, which yields time complexity $\Oh(nz)$.
  \end{proof}

  \begin{theorem}
    A representation of size $\Oh(nz)$ of all covers of a weighted sequence $X$ of length $n$
    can be computed in $\Oh(nz)$ time.
    In particular, all shortest covers of $X$ can be determined in $\Oh(nz)$ time.
  \end{theorem}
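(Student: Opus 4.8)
The plan is to assemble the components developed in this subsection on top of the weighted index $\I$. First I would build $\I$, which (as shown above) takes $\Oh(nz)$ time, and then isolate its \emph{prefix-node subtree}: an explicit node of $\I$ is a prefix node exactly when it is an ancestor of a leaf whose occurrence list contains the position $1$, i.e.\ when the corresponding string is a solid prefix of $X$. Since at most $z$ maximal solid factors start at position $1$ (\cref{fct:maximal}), there are at most $z$ such leaves, so the prefix-node subtree has at most $z$ leaves. Note that every ancestor of a prefix node is again a prefix node (a prefix of a solid prefix is a solid prefix), so each root-to-(prefix-)leaf path consists entirely of prefix nodes; by \cref{obs:starting} the subtree decomposes into $\Oh(z)$ covering paths. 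Marking the prefix nodes and computing the branching/starting classification costs $\Oh(nz)$ time, as the total length of the covering paths is bounded by the size of $\I$.

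Next I would invoke the two workhorse lemmas: \cref{lem:starting} to obtain $\D(v)$ for every starting prefix node and \cref{lem:maxgaps} to obtain $\maxgap(\D(v))$ for every prefix node, both in $\Oh(nz)$ time. With these values available, the representation of all covers is produced by one call to $\covercheck(w)$ at each prefix node $w$. The observation making this correct is that the nodes strictly between a prefix node $v$ and its prefix-node child $w$ on a covering path are implicit and share the same set of descendant leaves as $w$, hence the same occurrence list $\OL(w)$; consequently, for a length $\ell \in (\depth(v),\depth(w)]$ the corresponding prefix $P$ of $X$ has $\Occ(P,X)=\OL(w)$. Since $1\in\OL(w)$ for every prefix node, the condition $1\in\Occ(P,X)$ holds automatically, and the cover condition reduces to $\maxgap(\D(w))\le |P|=\ell$. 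Thus the covers of length in $(\depth(v),\depth(w)]$ are exactly those with $\ell\in[\max(\maxgap(\D(w)),\depth(v)+1)\,..\,\depth(w)]$, which is precisely the interval reported by $\covercheck(w)$. Taking the union of these intervals over all prefix nodes $w$ yields a representation of all covers; as each of the $\Oh(nz)$ explicit prefix nodes contributes a single interval, its size is $\Oh(nz)$.

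For the shortest covers, I would scan the reported intervals to find the minimum feasible length $\ell^*$; the shortest covers are then the $\Oh(z)$ prefix nodes at depth $\ell^*$, namely one per covering path whose interval contains $\ell^*$. Each such node corresponds to a distinct shortest-cover string, and they are extracted within the $\Oh(nz)$ budget. The total running time is dominated by the index construction and by \cref{lem:starting,lem:maxgaps}, hence $\Oh(nz)$.

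The main obstacle is not in this bookkeeping but in the two supporting lemmas, which are already established. The only remaining delicate point is one of correctness: I must verify that the occurrence set is constant along a compacted edge, so that the single threshold $\maxgap(\D(w))$ on the length captures exactly the covers realized on that edge, together with the fact that prefix nodes satisfy $1\in\Occ(P,X)$ by construction. Both follow directly from the definitions of the prefix-node subtree and of $\D$.
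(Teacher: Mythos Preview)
Your proposal is correct and follows essentially the same route as the paper: build $\I$, compute $\maxgap(\D(v))$ for all prefix nodes via \cref{lem:maxgaps} (which itself invokes \cref{lem:starting}), and then run $\covercheck$ at each prefix node; the paper's proof is in fact terser and leaves the correctness justification you spell out (constancy of the occurrence set along a compacted edge, and $1\in\OL(w)$ for prefix nodes) implicit in the surrounding discussion. One small wording issue: the shortest covers need not sit at explicit prefix nodes of depth $\ell^*$ but at possibly implicit locations on the corresponding edges; the paper sidesteps this by simply noting there are at most $z$ solid prefixes of any fixed length, hence at most $z$ shortest covers, all listable within the $\Oh(nz)$ budget.
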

  \begin{proof}
    To annotate all the covers on the edges of the index, we compute the maxgaps for
    all the prefix nodes using \cref{lem:maxgaps} and then apply the constant-time $\covercheck$ routine
    for each of the nodes.
    As for the shortest covers, there are at most $z$ of them (as there are at most $z$ different solid prefixes
    of $X$ of a specified length, each with probability of occurrence at least \fr), so they can all be listed explicitly in $\Oh(nz)$ time and space.
  \mayqed\end{proof}

  \section{Conclusions}
  We have presented an index for weighted pattern matching queries which for a constant-sized alphabet
  has $\Oh(nz)$ size and admits $\Oh(nz)$ construction time.
  It answers queries in optimal $\Oh(m+\mathit{Occ})$ time.
  We have also mentioned two applications of the weighted index.
  Our index outperforms the previously existing solutions by a factor of $z \log z$ in the complexity.

  Let us briefly discuss how to adapt our index to a general integer alphabet.
  The size of the input is then the total length $R$ of the lists in the representation of the weighted sequence.
  In the construction of the solid factor trie we need the list at each position to be ordered according to the probabilities of letters.
  As the size of each list to be sorted is $\min(z,\sigma)$ (at most $z$ letters can have probability at least \fr),
  the sorting requires $\Oh(R \log \min(\sigma,z))$ time.
  The construction of a suffix tree of a tree of \cite{journals/ieice/Shibuya03} works for any integer alphabet.
  Finally, our weighted index is a compacted trie with children of a node being indexed
  by the letter of the alphabet.
  Hence, to avoid an increase of the complexity of a query for a particular child of a node, for a general alphabet
  one requires to store a hash table of children.
  With perfect hashing \cite{DBLP:journals/jacm/FredmanKS84} the complexity does not increase but becomes
  randomized (Las Vegas, running time w.h.p.).

  An open question is whether our weighted index, constructed for a given $z$, can be adapted to answer
  weighted pattern matching queries for $z^{\prime}<z$, as it is the case of \cite{DBLP:journals/corr/ThankachanPSB15}.

\bibliographystyle{plain}
  \bibliography{weighted_index}

\appendix
  \newpage
  \section{Illustrations for sample weighted sequence}\label{app:figures}
  For succinctness of presentation, for a non-solid position $i$ (i.e., a position with at least two different letters), the set of pairs  
    is denoted by $[(s_{j_1}, \pi_i(s_{j_1})),\ldots,(s_{j_k}, \pi_i(s_{j_k}))]$; for a solid position $i$ it is
    simply denoted by the letter $s_j$ with $\pi_i(s_j)=1$. 
  
  \begin{example}\label{ex:1}
  Example of $\T$ for $z=4$ and the weighted sequence
  $$X=[(\texttt{a},0.5),(\texttt{b},0.5)]\texttt{bab}[(\texttt{a},0.5),(\texttt{b},0.5)][(\texttt{a},0.5),(\texttt{b},0.5)]\texttt{aaba}.$$
  Among a few heavy strings of $X$, we select $\X=\texttt{ababaaaaba}$.
  
  \bigskip
  \begin{center}
  \includegraphics{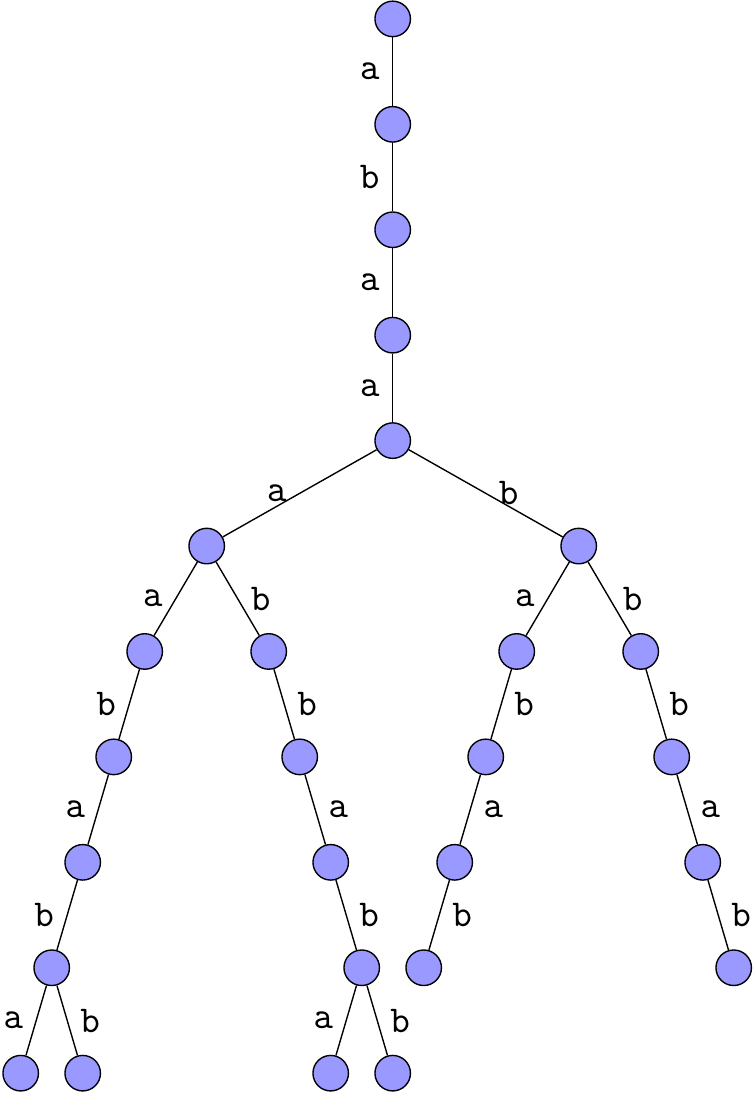}
  \end{center}
  \end{example}

  \newpage
  
  \begin{example}\label{ex:2}
  Example of $S(\T)$ (labels of edges to terminals are appended with a $ \$$) for $\T$ shown in \cref{ex:1}.
  
  \bigskip
  \begin{center}
    \includegraphics[angle=270,origin=c]{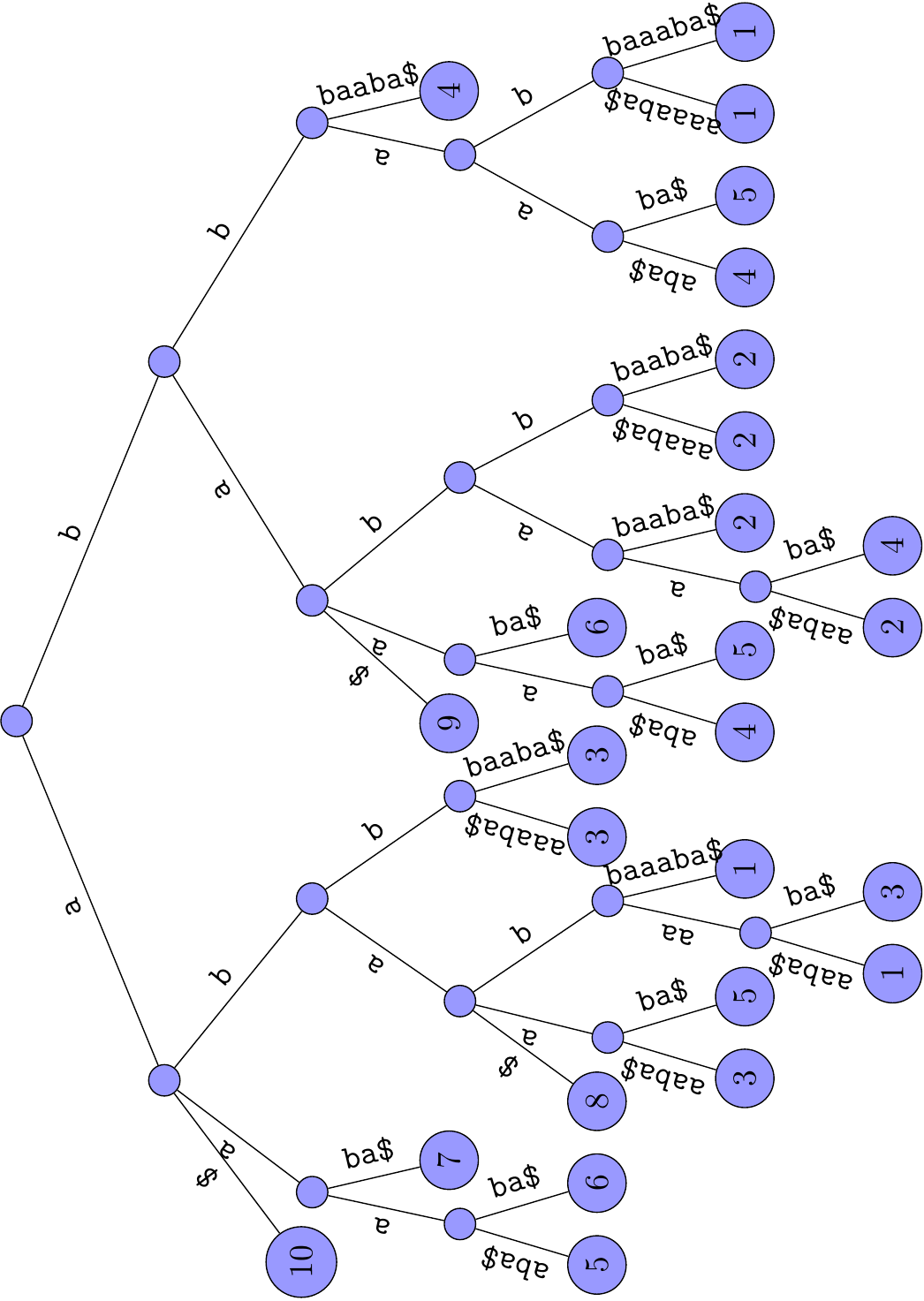}
  \end{center}
  \end{example}

  \newpage

\begin{example}
  Example of $\I$ for $\S(\T)$ shown in \cref{ex:2}. Note that compared to $\S(\T)$ all leaves with label 1 are ``lifted'' by 5 characters up.
  Consequently, a node with label $\texttt{ababa}$ becomes explicit, and a node with label $\texttt{ababaa}$ becomes implicit.
  
  \bigskip
  \begin{center}
    \includegraphics[angle=270,origin=c]{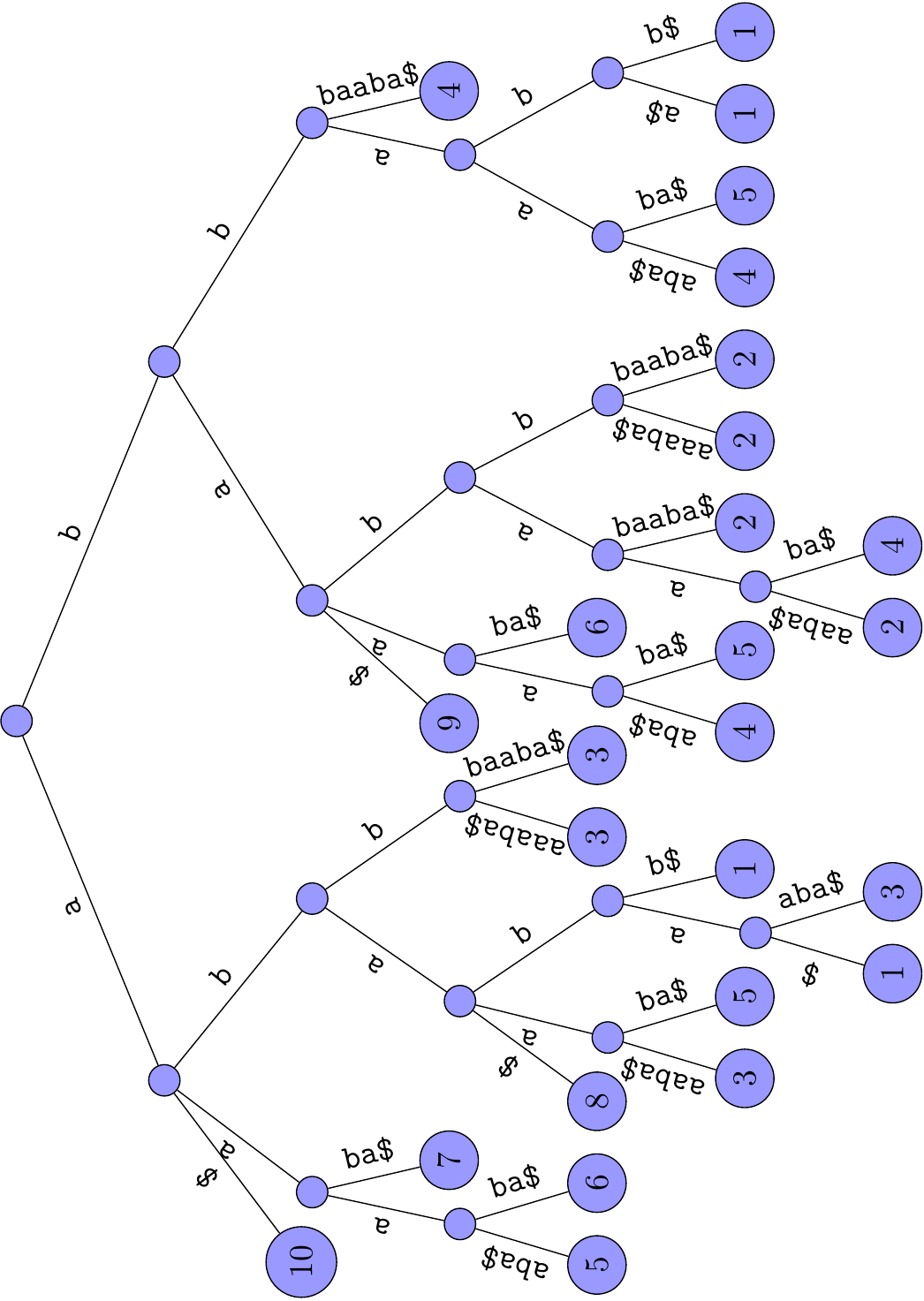}
  \end{center}
  \end{example}

  \bigskip
  \begin{example}
    For the weighted sequence from \cref{ex:1} and $z=4$, the weighted prefix table is:
    $$[5,\,1,\,5,\,3,\,3,\,1,\,1,\,3,\,1,\,1].$$
    The weighted sequence has exactly one cover, \texttt{aba}.
  \end{example}
    
\end{document}